		\let\tilde\wtilde
		\let\bar\wbar
  	\let\cup\cupprod
		\let\cap\capprod
		\def\bigcup{\bigcupprod\limits}
		\def\llshift{-2mu}}%
\def\llshift{-7mu}\usepackage{amssymb}}  
\newcommand{\wtilde}{\tilde}
  \newtheorem{theorem}{Theorem}[section]
  \newtheorem{corollary}[theorem]{Corollary}
  \newtheorem{proposition}[theorem]{Proposition}
  \newtheorem{lemma}[theorem]{Lemma}
\theoremstyle{definition}
  \newtheorem{definition}[theorem]{Definition}
\theoremstyle{remark}
  \newtheorem{remark}[theorem]{Remark}
  \newtheorem{myrems}[theorem]{Remarks}
  \newtheorem{myexs}[theorem]{Examples}
\newenvironment{remarks}{\begin{myrems}\begin{nummer}}%
    {\end{nummer}\end{myrems}}
\DeclareMathOperator{\supp}{supp}
\DeclareMathOperator{\spec}{spec}
\DeclareMathOperator{\dom}{dom}
\DeclareMathOperator{\tr}{tr}
\def\timesop{\mathop{\raisebox{-1.5pt}{\Large $\times$}}\limits}
\DeclareMathOperator{\one}{{\mathds{1}}}
\DeclareMathOperator{\e}{e}
\def\d{\mathrm{d}}
\let\i\undefined
\DeclareMathOperator{\i}{i\kern.5pt}
\newcommand{\cH}{\mathcal{H}}
\def\Chi{\raisebox{.2ex}{$\chi$}}
\newcommand{\NN}{\mathbb{N}}
\newcommand{\ZZ}{\mathbb{Z}}
\newcommand{\zd}{\ZZ^{d}}
\newcommand{\RR}{\mathbb{R}}
\newcommand{\CC}{\mathbb{C}}
\newcommand{\EE}{\mathbb{E}}
\newcommand{\PP}{\mathbb{P}}
\newcommand{\DD}{\mathds{D}}
\newcommand{\HH}{\mathds{H}}
\newcommand{\UU}{\mathds{U}}
\newcommand{\IZ}{\mathds{N}}
\newcommand{\sma}[2]{\tmatrix{#1 & #2}{#2 & -#1}}
\let\emptyset\varnothing
\renewcommand{\le}{\leqslant}
\renewcommand{\ge}{\geqslant}
\renewcommand{\leq}{\leqslant}
\renewcommand{\geq}{\geqslant}
\newcommand{\Vv}{\mathbf{V}}
\newcommand{\HHL}{\HH^{(L)}}
\numberwithin{equation}{section}
\def\tmatrix#1#2{\left(\begin{smallmatrix} #1 \\[.5ex] #2\end{smallmatrix}\right)}
\newcommand{\tnorm}[2][]{\ifx @#1@\def\myleft{\left}%
          \def\myright{\right}%
     \else\def\myleft{\csname#1l\endcsname}%
          \def\myright{\csname#1r\endcsname}%
          \ifx\myleft\normall\def\myleft{\relax}\def\myright\relax\fi%
     \fi%
     \myleft|\mkern-2.5mu\myleft|\mkern-2.5mu\myleft| #2 \myright|\mkern-2.5mu\myright|\mkern-2.5mu\myright|}
\newcommand{\dscalar}[3][]{\ifx @#1@\def\myleft{\left}%
          \def\myright{\right}%
     \else\def\myleft{\csname#1l\endcsname}%
          \def\myright{\csname#1r\endcsname}%
          \ifx\myleft\normall\def\myleft{\relax}\def\myright{\relax}\fi%
     \fi%
     \myleft\langle\mkern\llshift\myleft\langle #2,#3 \myright\rangle\mkern\llshift\myright\rangle}
\def\llangle{\langle\mkern-4mu\langle}
\def\rrangle{\rangle\mkern-4mu\rangle}
\newcounter{numcount}
\newcommand{\labelnummer}{\textup{(\roman{numcount})}}%
\newenvironment{nummer}%
{\let\curlabelspeicher\@currentlabel%
  \begin{list}{\labelnummer}{\usecounter{numcount}\leftmargin0pt%
      \topsep0.5ex\partopsep2ex\parsep0pt\itemsep0.5ex\@plus1\p@%
      \labelwidth3.5em\itemindent3.5em\labelsep1em}%
    \let\saveitem\item%
    \def\item{\saveitem%
      \def\@currentlabel{\curlabelspeicher\kern1pt\labelnummer}%
      \let\label\bemlabel}}%
  {\vskip1ex\end{list}}%
\def\itemref#1{\mysaveref{item-#1}}
\def\bemlabel#1{\yetanotherlabel{#1}
  \def\@currentlabel{\labelnummer}
  \yetanotherlabel{item-#1}}%
\def\pper{.}
\def\HarvardComma{}
\newcounter{aucount}
\newif\ifedplural
\newif\ifper\pertrue
\def\au#1#2{{#1 #2}}
\def\lau#1#2{{#1 #2}, }
\def\ed#1#2{\ifnum\theaucount=0\relax\fi{#1 #2}\addtocounter{aucount}{1}}
\def\led#1#2{\ifnum\theaucount=0\relax\edpluralfalse\else\edpluraltrue\fi{#1
    #2} (\editorname.),\setcounter{aucount}{0}}
\def\editorname{\ifedplural Eds\else Ed\fi}
\def\et{\ifnum\theaucount=1\else\HarvardComma\fi{} and\ }
\def\ti#1{\emph{#1}.\ifper\fi\pertrue}
\def\bti{\@ifnextchar[\bbti\bbbti}
\def\bbti[#1]#2{\emph{#2}, #1.}
\def\bbbti#1{\emph{#1}.}
\def\z{\@ifnextchar[\zz\zzz}
\def\zz[#1]#2#3#4#5{\perfalse{#2} \textbf{#3}, #4 \ifx
  @#5@\relax\else (#5)\fi{} [#1]\ifper\pper\fi\pertrue}
\def\zzz#1#2#3#4{{#1} \textbf{#2}, #3 \ifx @#4@\relax\else
  (#4)\fi\ifper\pper\fi\pertrue}
\def\pub{\@ifstar\pubstar\pubnostar}
\def\pubnostar{\@ifnextchar[\@@pubnostar\@pubnostar}
\def\@@pubnostar[#1]#2#3#4{#2, #3, #4, #1\ifper\pper\fi\pertrue}
\def\@pubnostar#1#2#3{#1, #2, #3\ifper\pper\fi\pertrue}
\def\pubstar[#1]#2#3#4{\perfalse #2, #3, #4 [#1]\pper\pertrue}
\def\@setauthors{%
   \begingroup
   \trivlist
   \centering\@topsep30\p@\relax
   \advance\@topsep by -\baselineskip
   \item\relax
   \andify\authors
   \def\\{\protect\linebreak}%
  \textsc{\authors}%
  \endtrivlist
   \endgroup
}
\title{Random block operators}
\thanks{Work supported by the German Research Foundation (DFG) through Sfb/Tr~12 ``Symmetries and Universality in Mesoscopic Systems'' and DFG Research Unit~718 ``Analysis and Stochastics in Complex Physical Systems''.}
\author{Werner Kirsch}
\address{Fakult\"at f\"ur Mathematik und Informatik, FernUniversit\"at Hagen,
L\"utzowstr.\ 125, 58084 Hagen, Germany}
\email{werner.kirsch@FernUni-Hagen.de}
\author{Bernd Metzger}
\address{Weierstrass Institute for Applied Analysis and Stochastics,
Moh\-renstr.\ 39,
10117 Berlin, Germany
}
\email{bernd.metzger@wias-berlin.de}
\author{Peter M\"uller}
\address{Mathematisches Institut der Universit\"at M\"unchen\\
	Theresienstr.\ 39\\	
	80333 M\"unchen\\
	Germany
	}
\email{mueller@lmu.de}
\date{\today}
\begin{document}

\begin{abstract}
	We study fundamental spectral properties of random block operators that are common in
	the physical modelling of mesoscopic disordered systems such as dirty
	superconductors. Our results include ergodic properties,
	the location of the spectrum, existence and regularity of the integrated density of states,
	as well as Lifshits tails. Special
	attention is paid to the peculiarities arising from the block structure
	such as the occurrence of a robust gap in the middle of the spectrum. Without
	randomness in the off-diagonal blocks the density of states typically exhibits an inverse
	square-root singularity at the edges of the gap. In the presence of
	randomness we establish a Wegner estimate that is valid at all energies. It
	implies that the singularities are smeared out by randomness, and the
	density of states is bounded. We also show Lifshits tails at these band edges. Technically, one has
	to cope with a non-monotone dependence on the random couplings.
\end{abstract}

\maketitle


\section{Introduction}


Random block operators play an important role in the mathematical modelling
of superfluid fermions in a random environment and are thus relevant for mesoscopic disordered quantum systems such as dirty superconductor devices. They arise in the Bogoliubov-de Gennes equation
\begin{equation}
	\label{BdG}
  \tmatrix{H&B}{B^{*} & - \bar{H}} \tmatrix{\psi_{+}}{\psi_{-}} = E \tmatrix{\psi_{+}}{\psi_{-}},
\end{equation}
that is, the eigenvalue problem for the quasi-particle (or excitation) states $ \binom{\psi_{+}}{\psi_{-}} $ in a mean-field approximation of BCS theory \cite{deGe66}. Without loss of generality we have assumed that the chemical potential equals zero in \eqref{BdG}. The `particle' and `hole' components $\psi_{+}$ and $\psi_{-}$ of the quasi-particle state belong to the single-particle Hilbert space $\cH$. The self-adjoint single-particle Hamiltonian $H=H^{*}$ and the so-called the pair potential or gap function $B$ are linear operators on $\cH$. The overbar in \eqref{BdG} denotes complex conjugation.

Following Altland and Zirnbauer \cite{AZ97} one can classify all block operators that arise in \eqref{BdG} according to their behaviour with respect to time-reversal and spin-rotation symmetry.
In this paper we will focus on random block operators of the form
\begin{equation}
	\label{HH-CI}
 	\HH :=  \tmatrix{H& \phantom{_{|}}B }{B & - H}
\end{equation}
with both $H$ and $B$ self-adjoint. This choice corresponds to symmetry class $C$I of \cite{AZ97} and describes physical systems for which both time-reversal and spin-rotation symmetry hold.
Since the Bogoliubov-de Gennes equation results from a mean-field approximation, the expressions for the operators $H$ and $B$ should be determined from self-consistency requirements. For disordered systems
the discrete Anderson model in $d$ dimensions,
\begin{equation}
 	H := \Delta + V \qquad\text{on}\quad \cH = \ell^{2}(\zd),
\end{equation}
is a generally accepted effective description for this, see e.g.\ \cite{ViSeFi00}. (Choosing $\HH$ as a random matrix from a suitable ensemble would be another \cite{AZ97}.) Here,
\begin{equation}
	\label{disc-laplace}
	(\Delta\psi)(j) := \sum_{i\in\zd: |i-j|=1}\psi(i)
\end{equation}
for all $j\in\zd$ and all $\psi\in\ell^{2}(\zd)$ is the centred discrete Laplacian and the random potential $V$ amounts to multiplication by independent and identically distributed, real-valued random variables
$\{V(j)\}_{j\in\zd}$ according to $(V\psi)(j) := V(j)\psi(j)$.

The form of the gap operator $B$, which should also be determined by self-consistency, depends on the pairing mechanism. For $s$-wave (a.k.a.\ conventional) superconductors $B$ is a multiplication operator in position space $\ell^{2}(\zd)$. Homogeneous $s$-wave superconductors are described by a multiple of the identity operator, $B=\beta\, 1$ with a self-consistently determined parameter $\beta >0$. Disordered $s$-wave superconductors are often described by an effective  random multiplication operator $B=b$ \cite{deGe66, Zie92, ViSeFi00}. Here $(b\psi)(j) := b(j)\psi(j)$, where $\{b(j)\}_{j\in\zd}$ are
independent and identically distributed real-valued random variables. In addition, the $b(j)$'s are often required to be independent of the $V(j)$'s.
Our main results in Sections~\ref{sec:wegner} and~\ref{sec:lif} will be proved in precisely this setting.

Non-diagonal gap operators $B$ occur in the modelling of $d_{x^2-y^2}$-wave superconductors.
For example, the momentum-dependent interaction of Cooper pairs  leads to  $B= \beta ( \Delta_x^{(1)} - \Delta_y^{(1)})$ for homogeneous superconductors in two dimensions \cite{DuLe00,ASZ02}, where $\beta >0$ and $\Delta^{(1)}_{x/y}$ denotes the one-dimensional centred discrete Laplacian in $x$-, resp.\ $y$-direction.
Our results may also be of relevance for inhomogeneous $d$-wave superconductors \cite{Zie92, ViSeFi00}, if we use models with diagonal disorder
for these materials.

The plan of this paper is as follows.
In Section~\ref{sec:block-spec} we study basic spectral features of block operators $\HH$ that are of the general form \eqref{HH-CI}. Among others we establish the existence of a robust spectral gap of $\HH$ in Proposition~\ref{spec-gap}.
We interpret the robustness of the gap in the context of Anderson's theorem \cite{And59, BaVeZh06} in Remark~\ref{gap-B-Anderson}.

Section~\ref{constndiag} briefly discusses an important special case of \eqref{HH-CI}, namely constant off-diagonal blocks $B =\beta \,1$. This serves to expose a typical phenomenon in the absence of disorder: the density of states of $\HH$ suffers from an inverse square-root singularity at the inner band edges in every dimension $d\in\NN$. The singularity is robust in the sense that it always shows up unless the density of states of $H$ vanishes at energy zero -- the location of the chemical potential.

We introduce the main objects of this paper, ergodic random block operators, in Section~\ref{sec:ergodic}. The basic spectral consequences of ergodicity are also explored there. This includes the location of the almost-sure spectrum, as well as the definition and self-averaging of the integrated density of states of $\HH$.

In Section~\ref{sec:wegner} we show that the density of states exists for suitable ergodic random block operators and that it is bounded. This is the content of Theorem~\ref{Wegner}, the main result of this paper, and follows from a Wegner estimate. Figure~\ref{Figure1} compares this situation to the one with the singularity at the inner band edges for a constant $B= \beta \, 1$ as in Section~\ref{constndiag}. Randomness in the off-diagonal blocks smooths out the singularities. We stress that the Wegner estimate of Theorem~\ref{Wegner} holds for a block random Schr\"odinger operator with a sign-indefinite single-site potential of mean zero. In contrast, for ordinary (i.e.\ non-block) random Schr\"odinger operators such a result is still missing despite a lot of recent efforts \cite{Ves10, Kru10}.

\begin{figure}
	\includegraphics[width=0.5\textwidth]{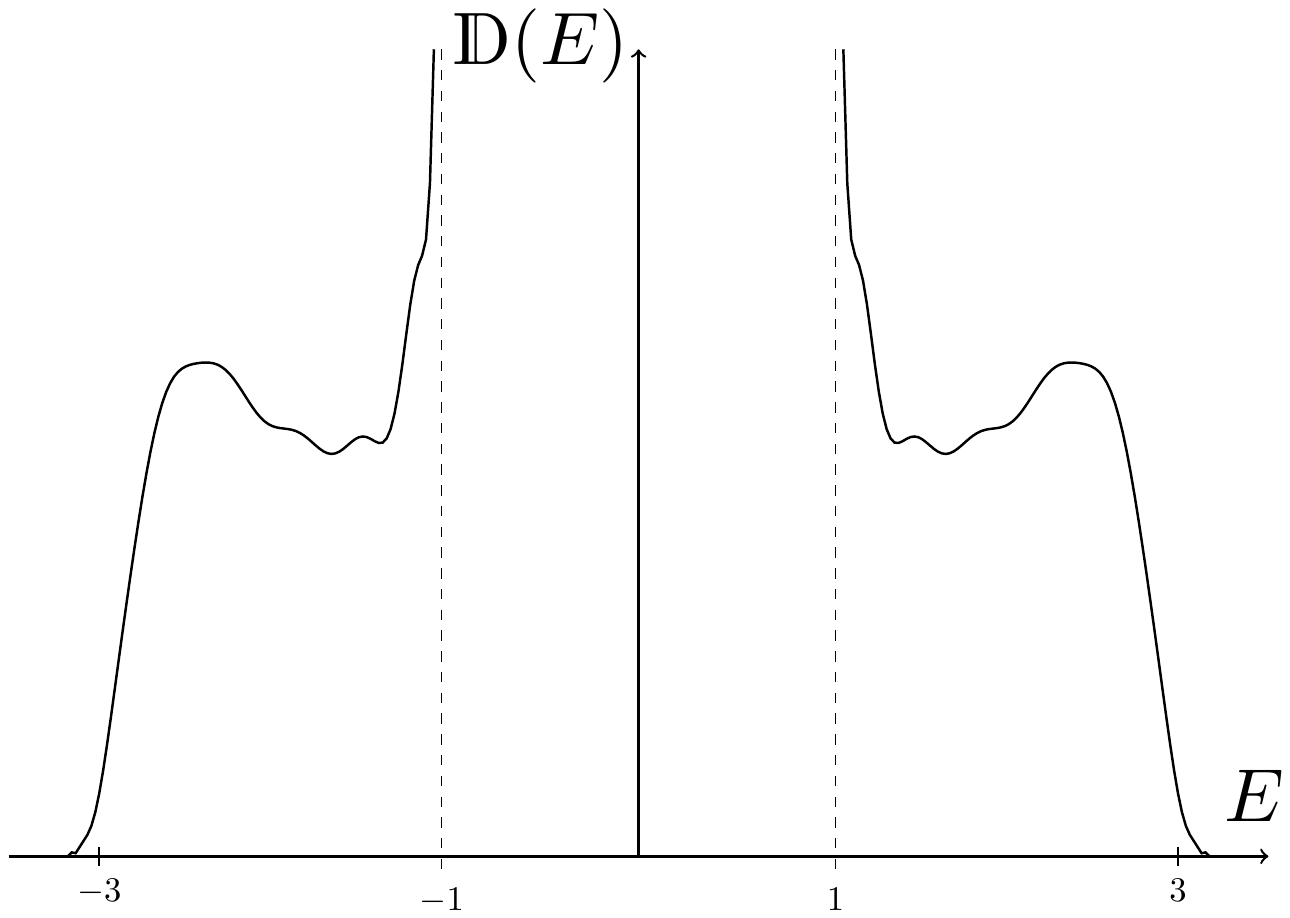}\hfill
  \includegraphics[width=0.5\textwidth]{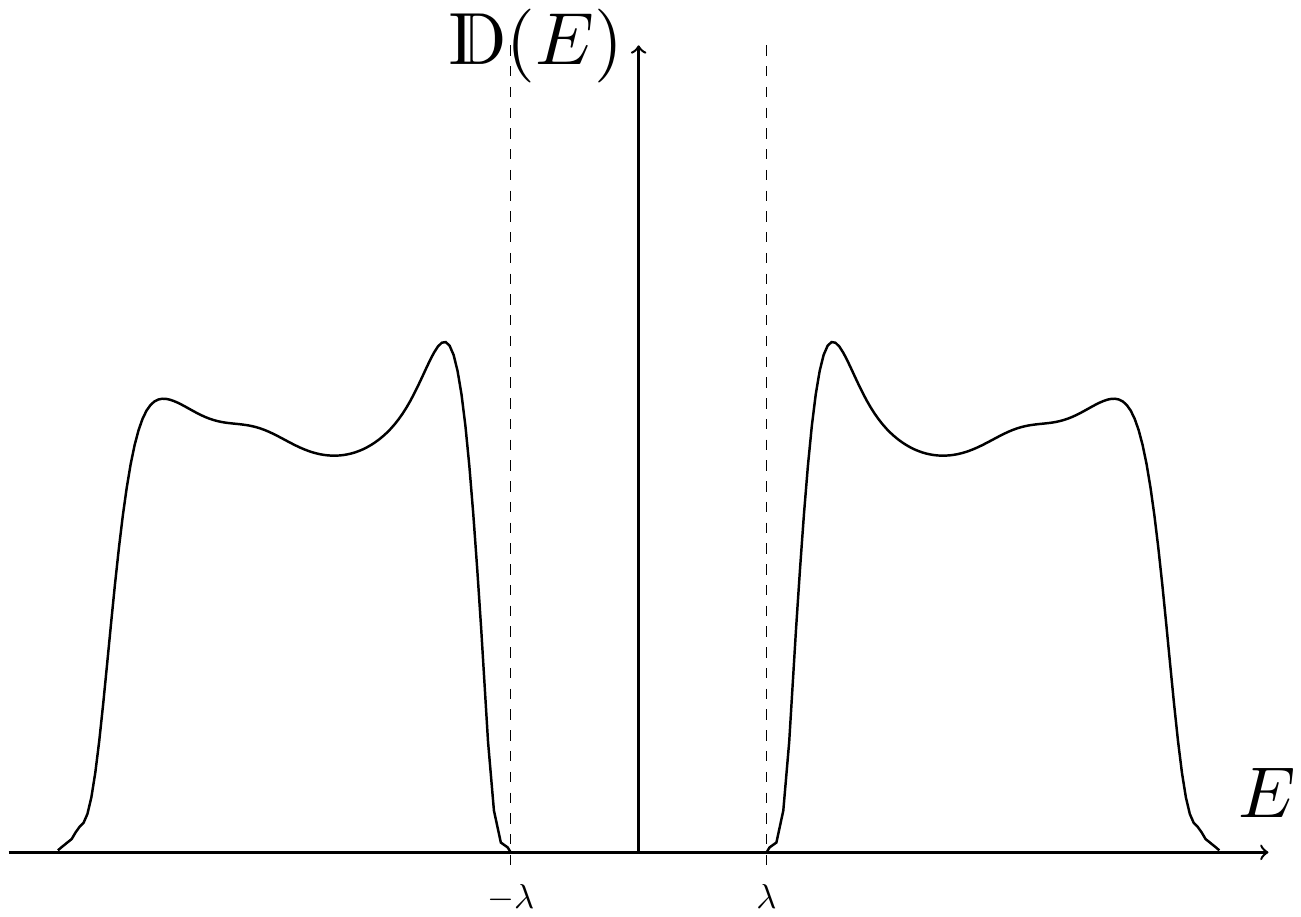}
  \caption{Comparison of the density of states of a random block operator $\HH$ without (left)
    and with (right) randomness in the off-diagonal blocks $B$. The singularities at the inner
		band edges are smoothed out by the randomness in $B$.}
  \label{Figure1}
\end{figure}

Finally, Section~\ref{sec:lif} establishes Lifshits tails for the integrated density of states  of $\HH$ at the inner band edges. Theorem~\ref{lif-gen} embodies the second main result presented here.


\section{Basic properties of block operators}
\label{sec:block-spec}


In this section we explore some fundamental properties of rather general self-adjoint block operators
\begin{equation}
\HH :=\tmatrix{H & B}{B & -H}
\end{equation}
on the Hilbert space $\cH^{2} := \cH \oplus \cH$. We equip $\cH^{2}$ with the scalar product
\begin{equation}
\llangle \Psi, \Phi\rrangle := \langle\psi_{1}, \varphi_{1}\rangle + \langle\psi_{2}, \varphi_{2}\rangle,
\end{equation}
where $\langle\cdot, \cdot\rangle$ stands for the Hilbert space scalar product
of $\cH$ and $\Psi := \binom{\psi_{1}}{\psi_{2}}, \Phi := \binom{\varphi_{1}}{\varphi_{2}} \in \cH^{2}$.
We write $\tnorm{\Psi} := (\|\psi_{1}\|^{2} + \|\psi_{2}\|^{2})^{1/2}$ for the induced norm.

Later we will be mainly interested in the case where the self-adjoint operator $H$ is a discrete Schr\"odinger operator on $\cH=\ell^2(\zd)$ and $B$ is a multiplication operator with some real-valued function $b$ on $\zd$. In this situation both operators $H$ and $B$ are frequently bounded, so that $\HH$ is unambiguously
well defined as a self-adjoint operator. However, we can treat unbounded operators as well, as one can read off from the following assertion.

\begin{proposition}
\label{selfad}
\begin{nummer}
\item Let $H,B$ be self-adjoint, assume that $\dom(B) \cap \dom(H)$ is a
	core for $H$ and that $\dom(|H|^{1/2}) \subseteq \dom(B)$. Then $\HH$ is essentially self-adjoint on $
	\big(\dom(B) \cap \dom(H)\big) \oplus \big(\dom(B) \cap \dom(H)\big)$.
\item Let $H$ be self-adjoint, let $B$ be symmetric and $H$-bounded with
	bound strictly smaller than one. Then $\HH$ is self-adjoint on $\dom(H) \oplus \dom(H)$.
\item Let $B$ be self-adjoint, let $H$ be symmetric and $B$-bounded with
	bound strictly smaller than one. Then $\HH$ is self-adjoint on $\dom(B) \oplus \dom(B)$.
\end{nummer}
\end{proposition}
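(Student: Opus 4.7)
My strategy is to write $\HH = \HH_{0} + \mathbf{B}$ on $\cH^{2}$, with
\[
  \HH_{0} := \tmatrix{H & 0}{0 & -H}, \qquad \mathbf{B} := \tmatrix{0 & B}{B & 0},
\]
and view $\mathbf{B}$ as a symmetric perturbation of the diagonal self-adjoint operator $\HH_{0}$. The operator $\HH_{0}$ is a direct sum of self-adjoint operators, hence self-adjoint on $\dom(H) \oplus \dom(H)$ and essentially self-adjoint on $D \oplus D$ for any core $D$ of $H$. The elementary template connecting one- and two-coordinate estimates is that an operator inequality $\|B\varphi\| \leqslant a\|H\varphi\| + b\|\varphi\|$ lifts, by Minkowski's inequality in $\RR^{2}$, to
\[
  \tnorm{\mathbf{B}\Psi} \leqslant a\,\tnorm{\HH_{0}\Psi} + b\,\tnorm{\Psi}
\]
with the \emph{same} relative constant $a$. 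After this, each of the three parts will follow from the Kato--Rellich theorem.

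For parts (ii) and (iii) the template closes the argument immediately. In (ii) the $H$-bound on $B$ with $a<1$ is the hypothesis, and the template promotes it to an $\HH_{0}$-bound with the same constant $a<1$; Kato--Rellich then yields self-adjointness on $\dom(\HH_{0}) = \dom(H) \oplus \dom(H)$. Part (iii) is the mirror argument with the roles of $H$ and $B$ interchanged; the sign discrepancy in the diagonal of $\HH_{0}$ does not enter because only norms appear in the boundedness estimates.

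The substantive case is (i). My first move will be to convert the form-level inclusion $\dom(|H|^{1/2}) \subseteq \dom(B)$ into an operator bound. Since $B$ is closed and $\dom(|H|^{1/2})$ is complete in the graph norm of $|H|^{1/2}$, the closed graph theorem applied to the inclusion supplies a constant $C$ with
\[
  \|B\psi\| \leqslant C\bigl(\||H|^{1/2}\psi\| + \|\psi\|\bigr), \qquad \psi \in \dom(|H|^{1/2}).
\]
Combining this with the interpolation identity $\||H|^{1/2}\psi\|^{2} = \langle\psi, |H|\psi\rangle \leqslant \|\psi\|\,\|H\psi\|$ and the arithmetic--geometric-mean inequality upgrades it to $\|B\psi\| \leqslant \varepsilon\|H\psi\| + c_{\varepsilon}\|\psi\|$ for every $\varepsilon>0$, so $B$ has $H$-bound zero. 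The template transfers this to $\mathbf{B}$ with respect to $\HH_{0}$, and Kato--Rellich in its essential-self-adjointness formulation then concludes that $\HH$ is essentially self-adjoint on every core of $\HH_{0}$ contained in $\dom(\mathbf{B})$, in particular on $(\dom(B) \cap \dom(H))^{2}$ by the core hypothesis. The main obstacle is spotting that the form-type inclusion in (i) is in fact strong enough to give an \emph{infinitesimal} $H$-bound for $B$ -- not merely some finite one -- after which all three statements fall into a single Kato--Rellich pattern.
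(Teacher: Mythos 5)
Your argument is correct and takes a genuinely different route from the paper's: the authors simply cite Theorem~2.6.6 and Proposition~2.3.6 of Tretter's monograph on block operator matrices, whereas you give a self-contained Kato--Rellich proof based on the decomposition $\HH = \HH_{0} + \mathbf{B}$. Your ``lifting template'' is sound (monotonicity plus the triangle inequality for the Euclidean norm on $\RR^{2}$) and correctly preserves the relative constant $a$, which makes parts~(ii) and~(iii) immediate consequences of Kato--Rellich. For part~(i), the passage from $\dom(|H|^{1/2}) \subseteq \dom(B)$ to the estimate $\|B\psi\| \le C\bigl(\||H|^{1/2}\psi\| + \|\psi\|\bigr)$ via the closed graph theorem, followed by the interpolation $\||H|^{1/2}\psi\|^{2} \le \|\psi\|\,\|H\psi\|$ and Young's inequality, is exactly the standard way to extract an infinitesimal relative bound; the essential-self-adjointness form of Kato--Rellich then applies on the core $\bigl(\dom(B) \cap \dom(H)\bigr)\oplus\bigl(\dom(B) \cap \dom(H)\bigr)$ of $\HH_{0}$. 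What your approach buys is transparency: the only external theorem is Kato--Rellich, not a specialised block-operator result. One small observation worth noting: under the hypotheses of~(i) one automatically has $\dom(H) \subseteq \dom(|H|^{1/2}) \subseteq \dom(B)$, so $\dom(B)\cap\dom(H)=\dom(H)$, and your proof in fact yields the stronger conclusion that $\HH$ is self-adjoint on $\dom(H) \oplus \dom(H)$; the stated essential self-adjointness is then immediate.
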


\begin{proof}
 The assertions are special cases of Thm.~2.6.6 and Prop.~2.3.6 in \cite{Tre08}.
\end{proof}

Without further mentioning we will assume in the rest of this paper that at least one of the three situations described by Proposition~\ref{selfad} applies, thereby ensuring self-adjointness of $\HH$. Next we compile some basic structural properties of the spectrum of $\HH$.

\begin{lemma}\label{diag-offdiag}
	The operators $\HH$ and $\HH' := \tmatrix{B & H}{H & -B}$ are unitary equivalent.
\end{lemma}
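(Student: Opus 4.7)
The plan is to exhibit an explicit unitary that intertwines $\HH$ and $\HH'$. The natural candidate, motivated by the fact that the transformation only mixes the roles of the diagonal and off-diagonal entries, is the block-Hadamard operator
\begin{equation*}
	U := \frac{1}{\sqrt{2}}\tmatrix{1 & 1}{1 & -1} \quad\text{on}\quad \cH^{2},
\end{equation*}
which acts as $U\binom{\psi_{1}}{\psi_{2}} = \frac{1}{\sqrt{2}}\binom{\psi_{1}+\psi_{2}}{\psi_{1}-\psi_{2}}$.

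First I would record that $U$ is bounded, symmetric and satisfies $U^{2} = \one$, so that $U^{*} = U = U^{-1}$ and, in particular, $U$ is unitary on $\cH^{2}$. Second, I would carry out the straightforward $2\times2$ block multiplication:
\begin{equation*}
	U\HH = \frac{1}{\sqrt 2}\tmatrix{H+B & B-H}{H-B & B+H},
\end{equation*}
and then multiplying again by $U$ on the right produces $U \HH U = \tmatrix{B & H}{H & -B} = \HH'$. Hence $\HH' = U \HH U^{*}$, which is the desired unitary equivalence.

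The only subtlety, and the step I would pay attention to, is the handling of domains in the unbounded case allowed by Proposition~\ref{selfad}. Since $U$ is bounded and maps each of the domains $\dom(H)\oplus\dom(H)$, $\dom(B)\oplus\dom(B)$ and $(\dom(B)\cap\dom(H))^{\oplus 2}$ into itself, the identity $U\HH U = \HH'$ holds on the natural core for $\HH$, and extends to the closures, which by Proposition~\ref{selfad} are the self-adjoint realisations in question. Thus no extra analytical work is required beyond the formal block computation.
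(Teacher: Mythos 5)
Your proof is correct and uses exactly the same unitary as the paper, namely the block-Hadamard involution $\UU_1 = \frac{1}{\sqrt{2}}\tmatrix{1 & 1}{1 & -1}$, with $\HH' = \UU_1\,\HH\,\UU_1^{*}$. The additional remark on domain-invariance in the unbounded case is sound and a bit more careful than what the paper records.
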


\begin{proof}
Define the unitary involution $\UU_1 := \frac{1}{\sqrt{2}} \tmatrix{1 & 1}{1 & -1}$, then $\HH'=\UU_1 \,\HH\, {\UU_1}^*$.
\end{proof}

\begin{lemma}
	\label{symm-spec}
	The spectrum of $\HH$ is symmetric around $0$, i.e.
	\begin{equation}
 	\spec(\HH) = - \spec (\HH).
	\end{equation}
	In particular, if $\HH\Psi = E \Psi$ for some $E\in\RR$ and $\Psi = (\psi_{1},\psi_{2})^T \in \cH^{2}$, then $\HH \tilde{\Psi} = -E \tilde{\Psi}$, where $\tilde{\Psi} = (\psi_{2},-\psi_{1})^T$.	 
\end{lemma}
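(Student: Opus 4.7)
The plan is to exhibit a unitary involution $\UU_2$ on $\cH^2$ that anticommutes with $\HH$, i.e.\ $\UU_2 \HH = -\HH\, \UU_2$, or equivalently $\UU_2 \HH\, \UU_2^{*} = -\HH$. Once such a $\UU_2$ is constructed, both assertions follow immediately: the first from unitary equivalence of $\HH$ and $-\HH$ (whose spectrum is $-\spec(\HH)$), and the second by applying $\UU_2$ to the eigenvalue equation $\HH\Psi = E\Psi$.

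The natural candidate, suggested by the form of $\tilde\Psi = (\psi_{2}, -\psi_{1})^{T}$ in the statement, is
\begin{equation}
  \UU_2 := \tmatrix{0 & 1}{-1 & 0}
\end{equation}
acting on $\cH^{2}$. This $\UU_2$ is clearly unitary (indeed $\UU_2^{*} = -\UU_2$ and $\UU_2^{2} = -\one$), and a direct $2\times 2$ block computation with the definition of $\HH$ from \eqref{HH-CI} yields $\UU_2 \HH = -\HH\, \UU_2$, using crucially that $H$ and $B$ are self-adjoint and that the diagonal blocks of $\HH$ differ only by a sign. Some care should be taken with domains when $H$ or $B$ is unbounded, but since $\UU_2$ acts merely by swapping and re-signing components, it leaves the common core $(\dom(B)\cap\dom(H))\oplus (\dom(B)\cap\dom(H))$ of Proposition~\ref{selfad} invariant, so the identity extends to all of $\dom(\HH)$.

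With the intertwining relation in hand, the spectral symmetry $\spec(\HH) = \spec(-\HH) = -\spec(\HH)$ is immediate from unitary equivalence. For the eigenvector statement, note that $\UU_2\Psi = (\psi_{2},-\psi_{1})^{T} = \tilde\Psi$, so applying $\UU_2$ to $\HH\Psi = E\Psi$ gives
\begin{equation}
  -\HH\tilde\Psi \;=\; -\HH\,\UU_2\Psi \;=\; \UU_2 \HH\Psi \;=\; E\,\UU_2\Psi \;=\; E\tilde\Psi,
\end{equation}
which is the claim. There is no real obstacle here; the only point requiring attention is the domain issue in the unbounded case, which is handled by the remark above.
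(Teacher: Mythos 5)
Your proof is correct and takes essentially the same approach as the paper: both define $\UU_2 = \tmatrix{0&1}{-1&0}$ and observe that $\UU_2\HH\UU_2^{*} = -\HH$. One minor terminological slip: since $\UU_2^{2} = -\one$ rather than $\one$, this $\UU_2$ is a unitary but not an involution, though this does not affect the argument.
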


\begin{proof}
 We define the unitary transformation $\UU_{2} :=
\tmatrix{0 & 1}{-1 & 0}$ on $\cH^{2}$, which obeys $\UU_{2}^{-2} = - \UU_{2}$, and observe
$\UU_{2} \HH \UU_{2}^{*} = - \HH$.
\end{proof}

\begin{remarks}
\item It follows that the spectrum of $\HH^{2}$ has multiplicity at least 2, except possibly at zero, and that 	
\begin{equation}
 \spec(\HH)=\{E\in\RR \mid E^2\in\spec(\HH^2)\}.
\end{equation}
\item The anti-symmetry of $\HH$ under the transformation $\UU_2$ is known as `particle-hole' symmetry.
\end{remarks}

\begin{lemma} \label{square}
The operator $\HH^2$ is given by
\begin{equation}
	\HH^{2} = \tmatrix{H^{2} + B^{2} & \;[H,B]}{ -[H,B] & \;H^{2} + B^{2}}
\end{equation}
and unitarily equivalent to $K_{-} \oplus K_{+}$ on $\cH^{2}$, where
\begin{equation}
	K_{\pm} := H^{2} + B^{2}  \pm \i [H,B].
\end{equation}
\end{lemma}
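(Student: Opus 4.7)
The plan is first to verify the explicit formula for $\HH^{2}$ by a direct block matrix multiplication: computing $\tmatrix{H & B}{B & -H}\cdot\tmatrix{H & B}{B & -H}$ entry by entry produces $H^{2}+B^{2}$ in both diagonal corners, $HB-BH=[H,B]$ in the upper right, and $BH-HB=-[H,B]$ in the lower left, as stated. Since $H$ and $B$ are self-adjoint, the commutator $C:=[H,B]$ is anti-self-adjoint, so that $\i C$ is self-adjoint and the candidates $K_{\pm}=H^{2}+B^{2}\pm\i C$ are formally symmetric -- as they must be if they are to arise from a unitary diagonalisation of the self-adjoint operator $\HH^{2}$.

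For the unitary equivalence I would introduce the constant block unitary
\begin{equation*}
	\UU_{3}:=\frac{1}{\sqrt{2}}\opma{1}{\i}{\i}{1}
\end{equation*}
on $\cH^{2}$, in the same spirit as the involutions $\UU_{1}$ and $\UU_{2}$ already used in this section; its adjoint is $\UU_{3}^{*}=\frac{1}{\sqrt{2}}\opma{1}{-\i}{-\i}{1}$ and $\UU_{3}\UU_{3}^{*}=1$ follows by inspection. Writing $A:=H^{2}+B^{2}$, a direct $2\times 2$ computation with the block matrix for $\HH^{2}$ then yields
\begin{equation*}
	\UU_{3}\,\HH^{2}\,\UU_{3}^{*}=\tmatrix{A-\i C & 0}{0 & A+\i C}=K_{-}\oplus K_{+},
\end{equation*}
which is the claimed unitary equivalence. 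The off-diagonal blocks cancel because the contributions proportional to $A$ and to $C$ in the upper right combine as $C-\i A+\i(A+\i C)\cdot\i=0$ (and similarly in the lower left), while on the diagonal the $C$-contributions survive with the expected signs.

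The only point requiring some care, and thus the nearest thing to an obstacle, is the handling of domains when $H$ or $B$ is unbounded. Since $\UU_{3}$ is bounded and acts on $\cH^{2}$ via a scalar $2\times 2$ matrix that merely mixes the two copies of $\cH$, it maps $\dom(\HH)\oplus\dom(\HH)$ onto itself and therefore preserves $\dom(\HH^{2})$. The block-matrix identities above are first established on the common core for $\HH$ supplied by Proposition~\ref{selfad} and then extend to all of $\dom(\HH^{2})$ by continuity. Beyond this technicality the argument is essentially $2\times 2$ linear algebra.
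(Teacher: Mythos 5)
Your proof is correct and follows essentially the same route as the paper: an explicit block multiplication for $\HH^{2}$, followed by conjugation with the very same unitary $\UU_{3}=\tfrac{1}{\sqrt{2}}\tmatrix{1 & \i}{\i & 1}$ to obtain $K_{-}\oplus K_{+}$. The extra remark on domains in the unbounded case is a sensible addition which the paper leaves implicit; the only blemish is that the parenthetical formula you offer for the vanishing of the upper-right block, $C-\i A+\i(A+\i C)\cdot\i=0$, does not simplify to zero as written (the correct bookkeeping is $(-\i A+C)+\i(\i C+A)=0$), but this is a transcription slip and the surrounding computation is sound.
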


\begin{proof} This follows from an explicit computation and the observation that
\begin{align}
 \UU_{3} \HH^{2} \UU_{3}^{*} = \tmatrix{K_{-} & 0}{0 & K_{+}}
\end{align}
for the unitary $\UU_{3} := \frac{1}{\sqrt{2}} \tmatrix{1 & \i}{\i & 1}$.
\end{proof}

\begin{remark}
\label{non-neg}
 Introducing the annihilation operator $a := H -\i B$ on $\cH$, we can write $K_{+} = a a^{*}$ and
 $K_{-} = a^{*} a$. Thus, the spectra of $K_{+}$ and $K_{-}$ differ at most by $\{0\}$. In fact, we have
\begin{equation}
 \tmatrix{K_{-} & 0}{0 & K_{+}} = \tmatrix{0 & a^{*}}{a & 0}^2
 \qquad\text{and}\qquad \UU_{3}\HH\UU_{3}^{*} = \i \tmatrix{0 & -a^{*}}{a & 0} .
\end{equation}
\end{remark}

A direct calculation also shows the next assertion.

\begin{lemma}
	\label{gen-unitary}
 Suppose there exists a unitary involution $U = U^{*} = U^{-1}$ on $\cH$ such that $HU + UH = 0$ and $[B,U] =0$. Then $\UU  := \frac{1}{\sqrt{2}}  \tmatrix{1 & U}{1 & -U}$ is unitary on $\cH^{2}$ and
$\UU \HH \UU^{*}= \tmatrix{H+UB &0}{0 & H - UB}$.
\end{lemma}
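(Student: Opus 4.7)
The plan is to verify the two claims by direct computation, relying only on the three defining identities $U^2 = 1$, $UH = -HU$ and $UB = BU$. First I would check that $\UU$ is unitary. Writing $\UU^* = \frac{1}{\sqrt{2}} \tmatrix{1 & 1}{U & -U}$ (using $U^* = U$), the product $\UU \UU^*$ computes blockwise to
\begin{equation*}
 \UU \UU^* = \tfrac{1}{2} \opma{1 + U^2}{1 - U^2}{1 - U^2}{1 + U^2} = \opma{1}{0}{0}{1},
\end{equation*}
and similarly for $\UU^* \UU$. This already uses the involution property $U^2 = 1$.

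Next I would compute $\UU \HH \UU^*$ by multiplying out in two steps. The intermediate product $\UU \HH$ gives a block matrix whose entries are $H \pm UB$ on the left column and $B \pm UH$ on the right column. At this point I would invoke the anticommutation $UH = -HU$ to rewrite $B \pm UH = B \mp HU$, so that each entry becomes a combination that will interact cleanly with the right column of $\UU^*$. The final multiplication by $\UU^*$ produces, in every block, a sum of the form $X + YU^2 \pm (UB\text{--}BU)$; the hypothesis $[B,U] = 0$ cancels the off-diagonal terms and leaves the diagonal entries $H + UB$ and $H - UB$ after dividing by two.

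There is no real obstacle here: the only care needed is to keep track of signs when pushing $U$ past $H$, and to use the commutation with $B$ at the very end to eliminate the off-diagonal blocks. Both the unitarity check and the conjugation identity reduce to a short $2\times 2$ block computation, so I would present the argument as one displayed calculation for each claim with the appropriate identity invoked inline.
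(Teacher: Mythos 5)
Your proposal is correct and follows exactly the route the paper intends: the paper itself only says ``A direct calculation also shows the next assertion,'' and your blockwise multiplication, using $U^{2}=1$ for unitarity and for the diagonal entries, anticommutation $UH=-HU$ to eliminate $UHU$, and $[B,U]=0$ to kill the off-diagonal blocks, is precisely that calculation.
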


Later we will use the following particularisation of Lemma~\ref{gen-unitary}.

\begin{corollary}
\label{unitary} Assume $B\equiv b$ is the
maximal self-adjoint multiplication operator by the function $b: \zd \rightarrow \RR$ and $\Delta$ the centred discrete Laplacian as defined in \eqref{disc-laplace}. Then the operator $\HH =
\tmatrix{\Delta & b}{b & -\Delta}$ is unitarily equivalent to $H_{+} \oplus H_{-}$, where
 $H_{\pm} := \Delta \pm (-1)^{j}b$.
\end{corollary}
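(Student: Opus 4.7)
The plan is to apply Lemma~\ref{gen-unitary} with a suitable unitary involution $U$ on $\cH=\ell^2(\zd)$. The natural candidate is the \emph{parity operator} defined as multiplication by the sign function $\sigma(j):=(-1)^{j_1+\cdots+j_d}$ for $j=(j_1,\dots,j_d)\in\zd$, i.e.\ $(U\psi)(j):=\sigma(j)\psi(j)$. This interprets the symbol $(-1)^j$ in the statement of the corollary.

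First I would check that $U$ satisfies the hypotheses of Lemma~\ref{gen-unitary}. Since $\sigma$ is real-valued and takes values in $\{\pm 1\}$, the multiplication operator $U$ is bounded, self-adjoint, and satisfies $U^2=1$, hence it is a unitary involution. Next, since both $U$ and $b$ are multiplication operators on $\ell^2(\zd)$, they commute, so $[B,U]=0$. Finally, I would verify the anticommutation relation $\Delta U+U\Delta=0$: for any $\psi\in\ell^2(\zd)$ and $j\in\zd$,
\begin{equation}
(\Delta U\psi)(j)=\sum_{i:|i-j|=1}\sigma(i)\psi(i)=-\sigma(j)\sum_{i:|i-j|=1}\psi(i)=-(U\Delta\psi)(j),
\end{equation}
using that any nearest neighbour $i$ of $j$ has parity opposite to $j$, so $\sigma(i)=-\sigma(j)$.

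Applying Lemma~\ref{gen-unitary} with this $U$ yields a unitary $\UU$ on $\cH^2$ such that
\begin{equation}
\UU\HH\UU^* = \tmatrix{\Delta+Ub & 0}{0 & \Delta-Ub}.
\end{equation}
Because $U$ and $b$ are both multiplication operators, $Ub$ is multiplication by the function $j\mapsto(-1)^{j_1+\cdots+j_d}b(j)$, which is precisely $(-1)^j b$ in the notation of the corollary. Therefore $\UU\HH\UU^*=H_+\oplus H_-$, establishing the claimed unitary equivalence.

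The argument is essentially a bookkeeping exercise once the right $U$ is identified; no step is technically hard. The only place where one has to be careful is the verification of $\{\Delta,U\}=0$, which ultimately rests on the bipartite structure of $\zd$ (nearest neighbours have opposite parity), and this is exactly why the construction fails for lattices without such a structure.
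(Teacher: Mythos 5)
Your proof is correct and follows exactly the paper's approach: choose $U$ to be multiplication by the parity function $(-1)^{j_1+\cdots+j_d}$ and invoke Lemma~\ref{gen-unitary}. The only difference is that you explicitly verify the three hypotheses of that lemma (unitary involution, $[b,U]=0$, anticommutation with $\Delta$ via the bipartite structure of $\zd$), which the paper leaves to the reader.
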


\begin{proof}
Choose  $U:= (-1)^{j}$, that is
\begin{equation}
(U\psi)(j) := (-1)^{j} \psi(j) := (-1)^{\sum_{k=1}^{d}j_{k}}\psi(j)
\end{equation}
for all $j=(j_{1},\ldots,j_{d}) \in \zd$ and
$\psi\in\ell^{2}(\zd)$, and apply Lemma~\ref{gen-unitary}.
\end{proof}

\begin{remarks}
\item It is essential for the validity of Corollary~\ref{unitary} that $\Delta$ contains
    no diagonal terms.
\item Lemma~\ref{square} and Corollary~\ref{unitary} imply that $\HH^{2}$ is both
    unitarily equivalent to $K_{-} \oplus K_{+}$ and also to $H_{+}^{2} \oplus
    H_{-}^{2}$. We have $U K_{+} U = K_{-}$ and $UH_{+}U = -H_{-}$. Note also that none
    of the operators $H_{+}^{2}$ or $H_{-}^{2}$ alone is unitarily equivalent to $K_{+}$
    or $K_{-}$.
\item From Corollary \ref{unitary} we can infer some information on the location of the spectrum of discrete Schr\"odinger operators with periodic potentials. For example, if $b$ is the constant function equal $\beta\in\RR$, then the corresponding operators $H_\pm$
are unitarily equivalent. Hence $\spec(H_+)=\spec(H_-)=\spec(\HH)$. Thus it will follow from
Proposition~\ref{spec-gap} below, that the interval $]-\beta,\beta[$ is a spectral gap for the operator $\Delta +(-1)^j\,\beta$.
\end{remarks}

Finally, we are concerned with locating the spectrum of general $\HH$
and with the occurrence of spectral gaps that arise from the special block structure.

\begin{proposition} \label{spec-gap}
	\begin{nummer}
	\item \label{spec-include}
	$\spec(\HH) \subseteq \big[ - \|H\| - \|B\|, \|H\| + \|B\| \big]$.
	\item \label{gap-H}
		Suppose there exists $ \lambda \ge 0$ such that $H \ge \lambda 1$. Then
		\begin{equation}
 			\spec (\HH) \;\cap\; ]-\lambda, \lambda [ = \emptyset.
		\end{equation}
	\item \label{gap-B}
		Suppose there exists $\beta \ge 0$ such that $B \ge \beta 1$. Then
		\begin{equation}
 			\spec (\HH) \;\cap\; ]-\beta, \beta [ = \emptyset.
		\end{equation}
	\item \label{gap}
		Suppose there exist $\lambda, \beta \ge 0$ such that  $H \ge \lambda 1$ and $B \ge \beta 1$. Then
		\begin{equation}
 			\spec (\HH) \;\cap\; \big]- (\lambda^{2}+\beta^{2})^{1/2}, (\lambda^{2}+\beta^{2})^{1/2} \big[ = \emptyset.
		\end{equation}
	\end{nummer}
\end{proposition}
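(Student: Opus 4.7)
The plan is to treat (i) as a direct consequence of the triangle inequality for block operators, and to reduce (ii)--(iv) to a single algebraic computation on $\cH$ via the factorisation of $\HH^{2}$ from Remark~\ref{non-neg}. For (i), the block-diagonal operator $\tmatrix{H & 0}{0 & -H}$ has norm $\|H\|$ and the off-diagonal operator $\tmatrix{0 & B}{B & 0}$ has norm $\|B\|$ (its square is the block-diagonal of $B^{2}$), so the triangle inequality gives $\|\HH\|\le\|H\|+\|B\|$ whenever both are bounded, and the claimed inclusion of the spectrum is vacuous otherwise.

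For the remaining parts I would use Remark~\ref{non-neg}, which tells us that $\HH^{2}$ is unitarily equivalent to $a^{*}a \oplus aa^{*}$ with $a := H - \i B$. Since $\HH$ is self-adjoint, $\spec(\HH)\cap\,]-c,c[\,=\emptyset$ is equivalent to $\HH^{2}\ge c^{2}\, 1$, and hence to the simultaneous bounds $a^{*}a \ge c^{2}\, 1$ and $aa^{*}\ge c^{2}\, 1$. The task therefore reduces to producing these two lower bounds.

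The key algebraic step is a kind of ``completing of the square'' in $a$. Writing $H = \lambda + H'$ and $B = \beta + B'$, where $H'$ and $B'$ are the non-negative self-adjoint operators afforded by the hypotheses, one obtains $a = (\lambda - \i\beta) + a'$ with $a' := H' - \i B'$, and a routine expansion yields
\begin{equation*}
 a^{*}a \;=\; (\lambda^{2} + \beta^{2}) \;+\; 2\lambda H' + 2\beta B' \;+\; (a')^{*}a',
\end{equation*}
together with the analogous identity for $aa^{*}$ in which the last term is replaced by $a'(a')^{*}$. Under the assumptions of (iv) each summand after $\lambda^{2}+\beta^{2}$ is a non-negative operator, so $a^{*}a, aa^{*}\ge \lambda^{2}+\beta^{2}$, proving (iv). Specialising to $\beta=0$ recovers (ii): the terms containing $B'$ drop out and no sign assumption on $B$ is needed. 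Setting $\lambda=0$ gives (iii); alternatively, (iii) follows from (ii) applied to the unitarily equivalent operator $\HH'$ of Lemma~\ref{diag-offdiag}, with the roles of $H$ and $B$ swapped.

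The only subtlety, and arguably the main obstacle, concerns the case of unbounded $H$ or $B$: there the expansion above has to be read as a quadratic-form identity on a common core, which Proposition~\ref{selfad} supplies, and the lower bounds then extend to the closure by a standard density argument. In the bounded setting relevant for the rest of the paper this complication does not arise.
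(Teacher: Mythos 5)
Your proposal is correct and follows essentially the same route as the paper: the paper likewise reduces to the unitary factorisation of $\HH^{2}$, shifts $H$ and $B$ by $\lambda$ and $\beta$ so that $\tilde H := H-\lambda\geq 0$ and $\tilde B := B-\beta\geq 0$, and then observes that $K_{\pm} = \tilde H^{2} + \tilde B^{2} \pm \i[\tilde H,\tilde B] + \lambda^{2} + 2\lambda\tilde H + \beta^{2} + 2\beta\tilde B \geq \lambda^{2}+\beta^{2}$, invoking Remark~\ref{non-neg} exactly as you do to handle the non-commuting part. Your rewriting via $a = H - \i B$ and $a' = \tilde H - \i\tilde B$ is a notational re-packaging of the same identity (since $K_{-} = a^{*}a$, $K_{+}=aa^{*}$, and $\tilde H^{2}+\tilde B^{2}-\i[\tilde H,\tilde B] = (a')^{*}a'$), and the derivation of (i) by the triangle inequality is interchangeable with the paper's $\|K_{\pm}\|\leq(\|H\|+\|B\|)^{2}$.
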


\begin{remarks}
\item \label{end-sharp}
	The endpoints of the interval in Proposition \ref{spec-include} are sharp upper and lower
	bounds for the maximum and minimum of $\spec(\HH)$. This can be seen by choosing $B$ as the
	multiplication operator by the function $b=(-1)^{j}$ in Corollary~\ref{unitary}.
\item The statements of Proposition~\ref{gap-H}, \itemref{gap-B} and~\itemref{gap} remain true if one
 replaces $H$ and $B$ by $-H$ and $-B$ in the assumptions.
\item \label{gap-B-Anderson}
	According to Proposition \ref{gap-B}, $\HH$ has always a spectral gap of size at least $2\beta$, no matter what $H$ is like. This statement can be interpreted in the context of Anderson's theorem \cite{And59, BaVeZh06}: Anderson argued that	adding non-magnetic impurities to an $s$-wave superconductor should have only little (or at best no) effect on the gap operator $B$. In view of Proposition~\ref{gap-B}, this implies stability of the gap under non-magnetic impurity doping and thus leads to the experimentally observed insensitivity of superconductivity in this case.
As an aside we mention that the situation is totally different when doping $d_{x^2-y^2}$-wave superconductors by non-magnetic impurities. Here it is known that disorder leads to the breaking of Cooper pairs.
\end{remarks}

\begin{proof}[Proof of Proposition~\ref{spec-gap}]
The statement in \itemref{spec-include} follows from Lemma \ref{square} and
$\|K_{\pm}\| \le ( \|H\| + \|B\|)^{2}$.

To prove (iv) we define $\tilde{H} := H - \lambda \ge 0$ and $\tilde{B} := B -\beta \ge 0$.
Then we get for the operators $K_{\pm}$ in Lemma~\ref{square}
\begin{equation}
 K_{\pm} = \tilde{H}^{2} + \tilde{B}^{2} \pm \i [\tilde{H},\tilde{B}]
 + \lambda^{2} + 2\lambda\tilde{H} + \beta^{2} + 2\beta\tilde{B} \ge \lambda^{2} + \beta^{2},
\end{equation}
where we used that the linear terms in $\lambda$ and $\beta$ are manifestly non-negative,
as is the sum of the first three terms on account of Remark~\ref{non-neg}. Hence, $\HH^{2} \ge \lambda^{2} + \beta^{2}$.

The proofs of~\itemref{gap-H} and~\itemref{gap-B} are simpler and proceed along the very same lines.
\end{proof}

Part~\itemref{gap-H} of Proposition~\ref{spec-gap} allows for a generalisation towards different diagonal blocks.

\begin{lemma}
 \label{gap-different}
 Let $H_{1}$ be a self-adjoint operator on $\cH$ such that the block operator $\HH_{1} := \tmatrix{H & B}{B & -H_{1}}$ is self-adjoint on $\cH^{2}$. Suppose there exists $ \lambda \ge 0$ such that both $H \ge \lambda $ and $H_{1} \ge \lambda$. Then
		\begin{equation}
 			\spec (\HH_{1}) \;\cap\; ]-\lambda, \lambda [ = \emptyset.
		\end{equation}
\end{lemma}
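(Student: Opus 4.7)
The plan is to adapt the strategy of Proposition~\ref{spec-gap}\itemref{gap-H} to the situation with unequal diagonal blocks. Concretely, I aim to show $\HH_{1}^{2} \geq \lambda^{2}\,\one$; by the spectral mapping theorem for the self-adjoint operator $\HH_{1}$ this is equivalent to $\spec(\HH_{1}) \subseteq\, ]-\infty,-\lambda] \cup [\lambda,\infty[$, which implies the claimed gap.

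The key step is a decomposition of $\HH_{1}^{2}$ in terms of the shifted operators $\tilde H := H - \lambda$ and $\tilde H_{1} := H_{1} - \lambda$, both self-adjoint and non-negative by hypothesis. Setting $\tilde{\HH}_{1} := \tmatrix{\tilde H & B}{B & -\tilde H_{1}}$, which is self-adjoint on the same domain as $\HH_{1}$ since it differs from $\HH_{1}$ only by the bounded block-diagonal operator $\lambda \tmatrix{1 & 0}{0 & -1}$, a direct expansion yields the operator identity
\begin{equation*}
\HH_{1}^{2} \;=\; \tilde{\HH}_{1}^{2} \;+\; \lambda^{2}\,\one \;+\; 2\lambda\tmatrix{\tilde H & 0}{0 & \tilde H_{1}}\pper
\end{equation*}
The diagonal blocks agree because $H^{2} = \tilde H^{2} + 2\lambda\tilde H + \lambda^{2}$ and analogously for $H_{1}$, whereas on the off-diagonal blocks the $\lambda$-contributions cancel: $HB - BH_{1} = \tilde H B - B\tilde H_{1}$.

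The conclusion is then immediate: $\tilde{\HH}_{1}^{2} \geq 0$ as a square of a self-adjoint operator, the summand $\lambda^{2}\,\one$ is positive, and $2\lambda \tmatrix{\tilde H & 0}{0 & \tilde H_{1}}$ is non-negative by $\tilde H, \tilde H_{1} \geq 0$. Hence $\HH_{1}^{2} \geq \lambda^{2}\,\one$, as required. The only technical point that needs care is the justification of the above operator identity when $H$, $H_{1}$, or $B$ is unbounded; this should be handled by first verifying it on a joint core of the type appearing in Proposition~\ref{selfad} and then closing, exactly as implicitly done in the proof of Proposition~\ref{spec-gap}\itemref{gap}.
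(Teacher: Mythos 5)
Your proof is correct and coincides with the paper's: the same shift $\tilde H := H-\lambda$, $\tilde H_1 := H_1 -\lambda$, the same decomposition $\HH_1^2 = \tilde{\HH}_1^2 + \lambda^2\one + 2\lambda\tmatrix{\tilde H & 0}{0 & \tilde H_1}$, and the same positivity argument. The remark about unbounded operators and cores is a reasonable technical aside that the paper leaves implicit.
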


\begin{proof}
 We define $\tilde{H} := H - \lambda \ge 0$ and $\tilde{H}_{1} := H_{1} - \lambda \ge 0$. An explicit computation gives
\begin{align}
 \HH_{1}^{2} &= \tmatrix{H^{2} + B^{2} & HB-BH_{1}}{BH-H_{1}B & H_{1}^{2} + B^{2}}
 = \tmatrix{\lambda^{2} + 2 \lambda \tilde{H} & 0}{ 0 & \lambda^{2} + 2 \lambda \tilde{H}_{1}}
 + \tmatrix{\tilde{H} & B}{B & -\tilde{H}_{1}}^{2} \nonumber\\[.5ex]
 &\ge \lambda^{2} \one.
\end{align}
\end{proof}


\section{The case of constant diagonal $B$}  \label{constndiag}


As a warm-up and in order to expose some typical features
we consider first the simple case where $B=\beta\,1$ is a constant multiple of the
identity. In particular, the operators $H$ and $B$ commute. As a consequence, the spectral
theory of $\HH$ can be reduced to the diagonalisation of a $2\times 2$-matrix and the
spectral theory of $H$.

\begin{proposition}\label{constantndiag-spec}
	Let $H$ be a self-adjoint operator on $\cH$, let $\beta \in \RR\setminus \{0\}$ and consider
	\begin{equation}
		\label{const-op}
		\HH :=\sma{H}{\beta 1}
	\end{equation}
	on $\cH^{2}$. Then
	\begin{nummer}
 	\item  \label{const-spec}
			$\spec(\HH) = \big\{ E_{\pm} := \pm (E^{2} + \beta^{2})^{1/2} :
			E \in \spec(H) \big\}$.
	\item \label{const-ef}
		If $H\varphi = E\varphi$ for some $E\in \spec(H)$ and $\varphi\in\cH$, then
		$\HH \tilde\Phi_{\pm} = E_{\pm} \tilde\Phi_{\pm}$
		for the non-normalised vector
		$\tilde\Phi_{\pm} := \binom{\varphi}{\beta^{-1}(E_{\pm} - E)\varphi} \in \cH^{2}$.
	\item \label{const-weyl}
	 Fix also $E \in \RR$. Then there exists a constant $C \equiv C_{E,\beta} >0$
		such that if $\| (H-E)\varphi\| \le \varepsilon$ for some $\varphi\in\cH$ and $\varepsilon>0$, then
		\begin{equation}
			\tnorm[normal]{(\HH -E_{\pm}) \tilde\Phi_{\pm}} \le C \varepsilon.
		\end{equation}
	\end{nummer}
\end{proposition}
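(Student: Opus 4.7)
The proof will exploit heavily that $B = \beta\,1$ commutes with $H$, so that the whole analysis reduces, morally speaking, to diagonalising the $2\times 2$ matrix $\sma{E}{\beta}$ fibre-wise over the spectrum of $H$.

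For part~\itemref{const-spec}, the plan is to combine Lemma~\ref{square} with the remark following Lemma~\ref{symm-spec}. Since $[H,B]=0$, the operators $K_{\pm}$ of Lemma~\ref{square} both reduce to $H^{2}+\beta^{2}\,1$, so $\HH^{2}$ is unitarily equivalent to $(H^{2}+\beta^{2}\,1)\oplus(H^{2}+\beta^{2}\,1)$. Applying the spectral mapping theorem yields $\spec(\HH^{2}) = \{E^{2}+\beta^{2} : E \in \spec(H)\}$, and the remark after Lemma~\ref{symm-spec} (that $\spec(\HH) = \{E\in\RR \mid E^{2}\in\spec(\HH^{2})\}$) immediately gives \itemref{const-spec}.

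For part~\itemref{const-ef}, I would simply verify by direct computation that $\HH\tilde\Phi_{\pm} = E_{\pm}\tilde\Phi_{\pm}$. Writing $c_{\pm} := \beta^{-1}(E_{\pm}-E)$, the top component of $\HH\tilde\Phi_{\pm}$ reads $(H+\beta c_{\pm})\varphi = (E+\beta c_{\pm})\varphi = E_{\pm}\varphi$ by definition of $c_{\pm}$. The bottom component is $(\beta - c_{\pm} H)\varphi = (\beta - c_{\pm}E)\varphi$; here one needs the identity $\beta - c_{\pm} E = c_{\pm} E_{\pm}$, which follows at once from $E_{\pm}^{2} = E^{2}+\beta^{2}$.

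Part~\itemref{const-weyl} is essentially the same computation, now with $(H-E)\varphi$ treated as a perturbation. Using the identities from \itemref{const-ef}, the terms that would be present in the equality case drop out, so the two components of $(\HH - E_{\pm})\tilde\Phi_{\pm}$ reduce to $(H-E)\varphi$ and $-c_{\pm}(H-E)\varphi$ respectively. Taking norms gives
\begin{equation}
\tnorm[normal]{(\HH-E_{\pm})\tilde\Phi_{\pm}}^{2} = (1+c_{\pm}^{2})\,\|(H-E)\varphi\|^{2},
\end{equation}
so that one may take $C := (1+\beta^{-2}(E_{\pm}-E)^{2})^{1/2}$.

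I do not anticipate any genuine obstacle here: once commutativity is used to trivialise the structural lemmas of Section~\ref{sec:block-spec}, everything reduces to the algebraic identity $E_{\pm}^{2}-E^{2}=\beta^{2}$ applied a few times. The only minor point to keep in mind is that $\tilde\Phi_{\pm}$ is intentionally not normalised, so no normalisation constant needs to be tracked in \itemref{const-weyl}.
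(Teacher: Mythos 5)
Your proof is correct and follows essentially the same route as the paper: part (i) reduces to Lemma~\ref{square} (with $K_{\pm}=H^{2}+\beta^{2}\,1$ because $[H,B]=0$) combined with the remark after Lemma~\ref{symm-spec}, part (ii) is the direct verification relying on $E_{\pm}^{2}-E^{2}=\beta^{2}$, and part (iii) is the same decomposition $(\HH-E_{\pm})\tilde\Phi_{\pm}=\binom{(H-E)\varphi}{-\beta^{-1}(E_{\pm}-E)(H-E)\varphi}$ and norm estimate that the paper writes out. No gaps.
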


\begin{proof}
Part~\itemref{const-spec} is a corollary of Lemma~\ref{square}, Part~\itemref{const-ef} follows from an explicit computation. As to Part~\itemref{const-weyl} we observe
\begin{equation}
 	\HH \tilde\Phi_{\pm} = \tmatrix{H\varphi + (E_{\pm} -E)\varphi}{\beta\varphi -
	\beta^{-1}(E_{\pm} -E) H\varphi}
	= E_{\pm}\tilde\Phi_{\pm} + \tmatrix{(H-E)\varphi}{-\beta^{-1}(E_{\pm} -E)(H-E)\varphi}.
\end{equation}
This implies
\begin{equation}
 	\tnorm[normal]{(\HH -E_{\pm})\tilde\Phi_{\pm}}^{2} \le \varepsilon^{2} + \beta^{-2}(E_{\pm} -E)^{2}
		\varepsilon^{2} =: C^{2} \varepsilon^{2}.
\end{equation}
\end{proof}

\begin{remark}
	We know already from Proposition~\ref{gap-B} that the spectrum of $\HH$ as given by \eqref{const-op}
	has always a gap between $-|\beta|$ and $|\beta|$ no matter how the spectrum of $H$ looks like. For an interpretation of this in the context of Anderson's theorem, see Remark~\ref{gap-B-Anderson}.
\end{remark}

We can also compute the density of states of $\HH=\binom{H \;\; \beta 1}{\beta 1 \; -H}$, given the operator $H$ possesses a density of states. The corresponding result is stated and proven here, even though we postpone the formal definition of the integrated density of states and of its Lebesgue derivative, the density of states, to the next section.

\begin{figure}
	\includegraphics[width=0.5\textwidth]{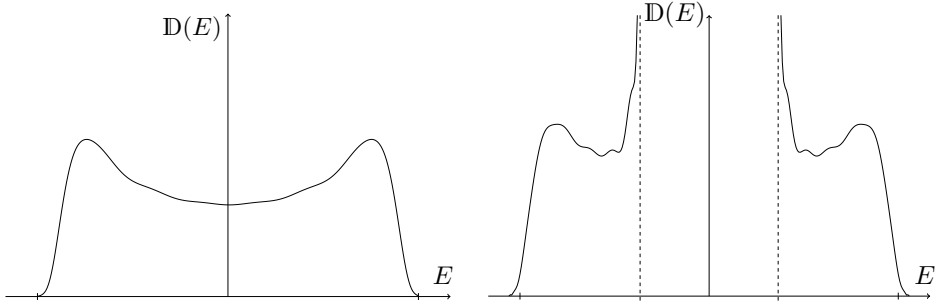}\hfill
  \includegraphics[width=0.5\textwidth]{intro-w-aus.pdf}
  \caption{Comparison of the density of states of a random block operator $\HH$ with $B=0$ (left)
    and $B= \beta 1$ (right).}
  \label{Figure2}
\end{figure}

\begin{proposition}
\label{const-dos}
If $H$ possesses an absolutely continuous integrated density of states with
Lebesgue derivative $D$, then the integrated density of states of $\HH$ is also absolutely
continuous with Lebesgue derivative given by
\begin{equation}
 	\DD(E) = \frac{|E|}{\sqrt{E^{2} - \beta^{2}}}\left[ D\big( \textstyle\sqrt{E^{2} - \beta^{2}}\,  \big)
	+ D\big( - \sqrt{E^{2} - \beta^{2}} \,\big) \right]
\end{equation}
for a.e.\ $E \in\spec(\HH) \subseteq \RR \setminus ]-|\beta|, |\beta|[$, and $\DD(E)
= 0$ for $E \notin\spec(\HH)$.
\end{proposition}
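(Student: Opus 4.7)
My plan is to reduce everything to the relation $\HH^2 \cong (H^2+\beta^2)^{\oplus 2}$ provided by Lemma~\ref{square} together with $[H,B]=0$, and then unwind the resulting identity between distribution functions. Throughout I denote by $\nu_{H}$ and $\nu_{\HH}$ the density-of-states measures associated to the (yet-to-be-formalised) traces per unit volume $\tau$, so that $N_{H}(E)=\tau(\one_{(-\infty,E]}(H))$ and similarly for $\HH$; by hypothesis $d\nu_{H}(E)=D(E)\,dE$.

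First, I would collect three structural ingredients. By Lemma~\ref{square}, since $[H,B]=0$, the operators $K_{\pm}$ both equal $H^{2}+\beta^{2}$, so $\HH^{2}$ is unitarily equivalent to $(H^{2}+\beta^{2})\oplus(H^{2}+\beta^{2})$. By Proposition~\ref{spec-gap}\itemref{gap-B}, $\spec(\HH)\subseteq\RR\setminus{]-|\beta|,|\beta|[}$, so $\nu_{\HH}$ has no mass in $]-|\beta|,|\beta|[$, and in particular no atom at $0$. Finally, by Lemma~\ref{symm-spec}, $\nu_{\HH}$ is symmetric under $E\mapsto -E$.

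Next, for $E\ge|\beta|$, I would apply functional calculus to the indicator $\one_{[-E,E]}$, noting that on $[-E,E]$ the square map takes values in $[0,E^{2}]$. Combining with the symmetry of $\nu_{\HH}$ and the absence of a point mass at $0$, this yields
\begin{equation}
2\,\nu_{\HH}\bigl((0,E]\bigr) \;=\; \nu_{\HH}\bigl([-E,E]\bigr) \;=\; \tau\bigl(\one_{[0,E^{2}]}(\HH^{2})\bigr).
\end{equation}
Using the unitary equivalence from Lemma~\ref{square} and then the functional calculus for $H$,
\begin{equation}
\tau\bigl(\one_{[0,E^{2}]}(\HH^{2})\bigr) \;=\; 2\,\tau\bigl(\one_{[0,E^{2}-\beta^{2}]}(H^{2}+\beta^{2})\bigr) \;=\; 2\,\nu_{H}\Bigl(\bigl[-\sqrt{E^{2}-\beta^{2}},\sqrt{E^{2}-\beta^{2}}\bigr]\Bigr),
\end{equation}
so that, writing $u:=\sqrt{E^{2}-\beta^{2}}$,
\begin{equation}
\nu_{\HH}\bigl((0,E]\bigr) \;=\; N_{H}(u)-N_{H}(-u^{-})
\end{equation}
for every $E\ge|\beta|$, while the left-hand side vanishes for $0<E<|\beta|$.

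From here the formula drops out by differentiation. Since the right-hand side is absolutely continuous in $u$ (with density $D(u)+D(-u)$) and $u(E)=\sqrt{E^{2}-\beta^{2}}$ is locally Lipschitz on $(|\beta|,\infty)$ with derivative $E/\sqrt{E^{2}-\beta^{2}}$, the composition is absolutely continuous in $E$ on every compact subinterval of $(|\beta|,\infty)$, and the chain rule for absolutely continuous functions gives
\begin{equation}
\DD(E) \;=\; \frac{E}{\sqrt{E^{2}-\beta^{2}}}\,\bigl[D(\sqrt{E^{2}-\beta^{2}})+D(-\sqrt{E^{2}-\beta^{2}})\bigr]
\end{equation}
for a.e.\ $E>|\beta|$. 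The symmetry of $\nu_{\HH}$ then promotes this identity to all $E$ with $|E|>|\beta|$, replacing $E$ by $|E|$ in the prefactor, and the absence of mass in the gap handles the remaining energies. The only subtleties are the usual ones about atoms (here harmless because of the gap and because $\nu_{H}$ is absolutely continuous by assumption), and verifying that absolute continuity is preserved under the change of variable near the (integrable) singularity at $E=\pm|\beta|$; I expect that to be the most delicate bookkeeping step, but not a genuine obstacle.
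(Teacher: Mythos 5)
Your proof is correct and reaches the stated formula, but it proceeds by a genuinely different route than the paper. The paper works at finite volume: it invokes Proposition~\ref{constantndiag-spec} to set up an exact bijection between eigenvalues of $H^{(L)}$ in $A_0$ and eigenvalues of $\HH^{(L)}$ in $A$, passes to the limit $L\to\infty$, and then reads off $\DD$ from a change of variables inside the integral defining $\int_A \DD$. You instead work directly in the infinite volume with the trace per unit volume, exploit Lemma~\ref{square} together with $[H,B]=0$ to write $\HH^2\cong (H^2+\beta^2)\oplus(H^2+\beta^2)$, turn $\nu_\HH([-E,E])$ into $\tau\big(\one_{[0,E^2]}(\HH^2)\big)$, unwind via the functional calculus for $H$, and differentiate. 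This is cleaner conceptually (no explicit eigenvector formulae, no finite-volume bookkeeping) and it makes visible exactly which structural facts drive the result; the paper's version is more elementary and keeps the finite-volume counting interpretation of $\DD$ in plain view. Three small remarks. First, you should note that the unitary $\UU_3$ from Lemma~\ref{square} acts constantly on fibres, so it commutes with translations and preserves $\tau$; otherwise the displayed identity $\tau\big(\one_{[0,E^2]}(\HH^2)\big)=2\,\tau_{\cH}(\cdots)$ is not automatic. Second, there is a slip in that same display: the indicator should be evaluated on $H^2$, not on $H^2+\beta^2$, i.e.\ $\one_{[0,E^2]}(H^2+\beta^2)=\one_{[0,E^2-\beta^2]}(H^2)$; as written it contradicts the next equality. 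Third, your $\tau$ must be the unnormalised trace per unit volume (with $\tau_{\cH^2}(\one)=2$) for your final formula to match the paper's; since the paper's own proof also uses the unnormalised convention for $\DD$ in this section, this is consistent, but it is worth flagging because it differs from the $\tfrac{1}{2|\Lambda_0|}$ normalisation of~\eqref{IDS-def-eq}. The integrability of the prefactor $|E|/\sqrt{E^2-\beta^2}$ near $\pm|\beta|$, which you flag as the delicate point, is indeed all that is needed to conclude absolute continuity of $\IZ$ from the distribution-function identity; the paper sidesteps this discussion by verifying the measure identity $\int_A\DD=\lim_L \#\{\cdots\}/|\Lambda_L|$ directly on Borel sets.
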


\begin{remark}
Observe that in the case considered here the density of states $\DD$ of $\HH$ has a
square-root singularity at the inner band edges $\pm\beta$ whenever $D(0+)\not=0$, see the right panel of Figure~\ref{Figure2}. This is typically the case, if $0$ lies in the interior of the spectrum of a random Schr\"odinger operator $H$ \cite{We91,Je92,HisMu08}. The square-root singularity may remind the reader of the van Hove
singularity in dimension $d=1$. In the present case the nature of the singularity is
independent of the dimension $d$, however.
\end{remark}

\begin{proof}
By assumption we have for every interval $A_{0} \subseteq \RR$
\begin{equation}
 	\int_{A_{0}}\!\d E_{0}\; D(E_{0}) = \lim_{L \to\infty}
 	\frac{\# \big\{\text{eigenvalues of $H^{(L)}$ lying in $A_{0}$}\big\}}{|\Lambda_L|},
\end{equation}
where $H^{(L)}$ is a self-adjoint finite-volume restriction of $H$ to the cube $\Lambda_L$ centred about the origin and containing $|\Lambda_{L}| =L^{d}$ points for $L \in\NN$ odd. In particular, the spectrum of $H^{(L)}$ is discrete and $H^{(L)}$ converge to $H$ in the limit $L\to \infty$.
Now, for a given interval $A \subseteq [0,\infty[$ we define
\begin{equation}
	A_{0} := \big\{E_{0} \in\RR : \sqrt{\smash{E_{0}^{2}} + \beta^{2}} \in A \big\}
\end{equation}
and
\begin{equation}
	\HH^{(L)} := \tmatrix{H^{(L)} & \beta 1}{\beta 1 & -H^{(L)}}.
\end{equation}
Proposition~\ref{constantndiag-spec} implies
\begin{multline}
	\# \big\{\text{eigenvalues of $H^{(L)}$ lying in $A_{0}$} \big\} \\
	=
	\# \big\{\text{eigenvalues of $\HH^{(L)}$ lying in $A$}\big\}.
\end{multline}
On the other hand, we have the identity
\begin{equation}
 	\int_{A_{0}}\!\d E_{0}\; D(E_{0}) = \int_{A}\!\d E\; \DD(E),
\end{equation}
which results from the change-of-variables $E_{0} =  \sqrt{E^{2} - \beta^{2}}$, $E>0$, on
$A_{0} \cap [0,\infty[$ and $E_{0} =  - \sqrt{E^{2} - \beta^{2}}$, $E>0$, on $A_{0} \cap ]-\infty, 0]$. Altogether we have
\begin{equation}
	\label{dosHH}
  \int_{A}\!\d E\; \DD(E) = \lim_{L \to\infty} \frac{\# \big\{\text{eigenvalues of $\HH^{(L)}$
  lying in $A$}\big\}}{|\Lambda_L|}.
\end{equation}
The same argument applies if $A \subseteq ]-\infty, 0[$. Thus, \eqref{dosHH} generalises
to arbitrary Borel sets $A \subseteq \RR$, proving the assertion.
\end{proof}



\section{Ergodic properties of random block operators}\label{sec:ergodic}


>From now on we are concerned with random block operators. We consider a probability space $(\Omega,\mathcal{F}, \PP)$ together with a block-operator-valued random variable
\begin{equation}
 \label{ran-HH-gen}
 \HH: \omega \mapsto \HH_{\omega} := \tmatrix{H_{\omega} & B_{\omega}}{B_{\omega} & -H_{\omega}},
\end{equation}
where $\HH_{\omega}$ is densely defined on $\cH^{2} = \ell^{2}(\zd) \oplus\ell^{2}(\zd)$ for
$\PP$-a.e.\ $\omega\in\Omega$.
Throughout we will assume that (at least) one of the situations of Proposition~\ref{selfad}
applies to $\HH_{\omega}$ for $\PP$-a.e.\ $\omega\in\Omega$ so that self-adjointness is ensured.

We say that the random block operator $\HH$ is \emph{ergodic} (w.r.t.\ $p\zd$-translations),
if there exists a period $p\in\NN^{d}$ and an ergodic group of measure-preserving
transformations $\{\tau_{j}\}_{j\in p\zd}$ on $\Omega$ such that $\HH$ fulfils the covariance
relation $\UU_{j}\HH_{\omega}\UU_{j}^{*} = \HH_{\tau_{j}(\omega)}$ for every $\omega\in\Omega$
and for every $j\in p\zd = \timesop_{k=1}^{d} (p_{k}\ZZ)$, where
$\UU_{j} := \tmatrix{U_{j} & 0}{0 & U_{j}}$ and $U_{j}$ is the unitary translation operator on
$\ell^{2}(\zd)$, that is $U_{j} \varphi := \varphi(\cdot -j)$ for every $\varphi \in \ell^{2}(\zd)$.

Consequently,  standard results \cite{Kir89, CaLa90, PaFi92, Kir08} imply the existence of a non-random
closed set $\Sigma \subseteq\RR$, the \emph{a.s.\ spectrum} of $\HH$, such that
$\spec(\HH_{\omega}) =\Sigma$ for $\PP$-a.e.\ $\omega\in\Omega$.
Analogous statements hold for the components in the Lebesgue decomposition of the spectrum.

Next we introduce the central quantity of this paper which measures the density of spectral values of
$p\zd$-ergodic random block operators $\HH$.
\begin{definition}
The (non-random) right-continuous, non-decreasing function $\IZ : \RR \rightarrow [0,1]$, defined by 	
\begin{equation}
\label{IDS-def-eq}
\IZ(E) :=  \frac{1}{2\,|\Lambda_0|} \; \;\EE \big[\tr_{\cH^{2}} [ \Chi_{\Lambda_0} \Chi_{]-\infty, E]}(\HH) ] \big]
\end{equation}
for all $E\in\RR$, is called \emph{integrated density of states} of $\HH$.
Here we introduced the elementary cell
$\Lambda_0 := \{j\in\zd : 0 \le j_{k} < p_{k} \;\text{for all}\; k=1,\ldots, d\}$, $\EE$
stands for the probabilistic expectation on $\Omega$, the trace extends over $\cH^{2}$
and the notation $\Chi_{\Lambda_0}$ refers to the multiplication operator
$\tmatrix{M & 0}{0 & M}$ where $M$ is multiplication by  the indicator function of $\Lambda_0$
on $\ell^{2}(\zd)$.
\end{definition}

As we shall see from the next lemma, the definition of $\IZ$ is justified by
Birkhoff's ergodic theorem. For $L\in\NN$ odd, we denote by $\Lambda_{L}(j)$ the cube centred
about $j\in\zd$ and containing  $|\Lambda_{L}(j)| =L^{d}$ many points of $\zd$. We also write $\Lambda_{L} := \Lambda_{L}(0)$.

\begin{lemma}
Let $\HH$ be the random, $\PP$-a.s.\ self-adjoint block matrix operator \eqref{ran-HH-gen},
which is ergodic w.r.t.\ $p\zd$-translations. Then there exists a set $\Omega_{0} \subseteq \Omega$
of full probability, $\PP(\Omega_{0})=1$, such that
\begin{equation}
  \IZ(E) = \lim_{L\to\infty} \frac{1}{2\,L^{d}} \; \tr_{\cH^{2}}
  \big[ \Chi_{\Lambda_{L}(j)} \Chi_{]-\infty, E]}(\HH_{\omega}) \big]
\end{equation}
for every $E\in\RR$, every $\omega\in\Omega_{0}$ and every $j\in\zd$.
Moreover, the set of growth points of $\IZ$ is given by the a.s.\ spectrum of $\HH$.
\end{lemma}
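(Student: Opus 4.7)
The plan is to treat the two assertions separately: first the almost-sure convergence via Birkhoff's ergodic theorem, and second the identification of the growth points of $\IZ$ with $\Sigma$ using the spectral theorem together with the covariance relation.

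For the convergence, I would fix $E\in\RR$ and introduce
\begin{equation*}
\xi_j^E(\omega) := \tr_{\cH^{2}}\!\big[\Chi_{\Lambda_0(j)}\, \Chi_{]-\infty,E]}(\HH_{\omega})\big],
\qquad j\in p\zd.
\end{equation*}
Since $\UU_{j}\Chi_{\Lambda_0(0)}\UU_{j}^{*}=\Chi_{\Lambda_0(j)}$, the covariance relation $\UU_{j}\HH_{\omega}\UU_{j}^{*}=\HH_{\tau_{j}\omega}$ together with cyclicity of the trace makes $\{\xi_{j}^{E}\}_{j\in p\zd}$ a stationary ergodic process, uniformly bounded by $2|\Lambda_0|$. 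The multidimensional Birkhoff theorem then yields, on a full-measure set $\Omega_E\subseteq\Omega$,
\begin{equation*}
\lim_{L\to\infty} \frac{1}{|\Lambda_L\cap p\zd|}\; \sum_{j\in \Lambda_L\cap p\zd}\xi_j^E(\omega) \;=\; \EE[\xi_0^E] \;=\; 2|\Lambda_0|\,\IZ(E).
\end{equation*}
Writing $\tilde\Lambda_L := \bigsqcup_{j\in\Lambda_L\cap p\zd}\Lambda_0(j)$, the sum equals $\tr_{\cH^{2}}[\Chi_{\tilde\Lambda_L}\Chi_{]-\infty,E]}(\HH_{\omega})]$; since the symmetric differences $\tilde\Lambda_L \triangle \Lambda_L(k)$ have cardinality $O(L^{d-1})$ for each fixed $k\in\zd$, they contribute only $O(L^{d-1})$ to the trace (the spectral projection has norm at most one), and the density ratio $|\Lambda_L\cap p\zd|\,|\Lambda_0|/L^{d}$ tends to one. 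Hence division by $2L^{d}$ and comparison with the Birkhoff limit yields the asserted convergence, simultaneously for every reference point $k\in\zd$.

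To obtain a single full-measure set $\Omega_0$ serving every $E\in\RR$ and every $k\in\zd$, I would intersect the $\Omega_E$ over a countable dense set $D\subseteq\RR$ and exploit the fact that both $\IZ$ and the functions $E\mapsto (2L^{d})^{-1}\tr_{\cH^{2}}[\Chi_{\Lambda_L(k)}\Chi_{]-\infty,E]}(\HH_\omega)]$ are non-decreasing and right-continuous; a standard Glivenko--Cantelli-type argument then promotes pointwise convergence on $D$ to pointwise convergence at every $E$. I expect this bookkeeping, together with the boundary corrections just mentioned, to be the principal technical step.

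For the second assertion I would use the representation
\begin{equation*}
\IZ(b)-\IZ(a) \;=\; \frac{1}{2|\Lambda_0|}\,\EE\!\big[\tr_{\cH^{2}}[\Chi_{\Lambda_0}\Chi_{(a,b]}(\HH)]\big],
\end{equation*}
which follows from \eqref{IDS-def-eq} and the functional calculus. If $E\notin\Sigma$, then for some $\varepsilon>0$ one has $(E-\varepsilon,E+\varepsilon)\cap\spec(\HH_{\omega})=\emptyset$ almost surely, so $\Chi_{(E-\varepsilon,E+\varepsilon)}(\HH_{\omega})=0$ and $\IZ$ is constant on this interval, showing that $E$ is not a growth point. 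Conversely, if $E$ is not a growth point, the right-hand side above vanishes for some $\varepsilon>0$ and $(a,b)=(E-\varepsilon,E+\varepsilon)$; non-negativity of the integrand forces $\Chi_{\Lambda_0}\Chi_{(E-\varepsilon,E+\varepsilon)}(\HH_{\omega})=0$ almost surely. The covariance relation transports this to $\Chi_{\Lambda_0(j)}\Chi_{(E-\varepsilon,E+\varepsilon)}(\HH_{\omega})=0$ almost surely for every $j\in p\zd$, and since $\{\Lambda_0(j)\}_{j\in p\zd}$ partitions $\zd$, the strong sum of these orthogonal multiplication operators equals the identity on $\cH^{2}$, yielding $\Chi_{(E-\varepsilon,E+\varepsilon)}(\HH_{\omega})=0$ almost surely and hence $(E-\varepsilon,E+\varepsilon)\cap\Sigma=\emptyset$.
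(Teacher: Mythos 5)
Your proposal reconstructs precisely the argument that the paper's one-line proof defers to ([KiMe07] and Sect.~5.1 of [Kir08]): Birkhoff's ergodic theorem applied to the stationary process $\xi_j^E$, a boundary-layer estimate of order $L^{d-1}$ to pass from the tiling $\tilde\Lambda_L$ to the cube $\Lambda_L(k)$, and a right-continuity/monotonicity argument to recover all $E\in\RR$ from a countable set; the identification of the growth points with $\Sigma$ via positivity of the trace and the partition $\sum_{j\in p\zd}\Chi_{\Lambda_0(j)}=1$ is also the standard route. One caveat worth making explicit: a ``Glivenko--Cantelli-type'' promotion from a countable dense $D$ to all of $\RR$ only delivers convergence at continuity points of $\IZ$ unless $D$ is chosen to contain the (at most countably many) discontinuity points of $\IZ$ as well --- or, equivalently, one runs Birkhoff simultaneously for $\Chi_{]-\infty,E]}$ and $\Chi_{]-\infty,E[}$ and sandwiches. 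Since the lemma claims convergence for \emph{every} $E$, you should state this augmentation of $D$; with it, your argument is complete and matches the approach the paper invokes.
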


\begin{proof}
 This is fully analogous to \cite{KiMe07} or Sect.~5.1 in \cite{Kir08}.
\end{proof}

Mostly we will be interested in more specific random block operators whose diagonal blocks are given
by the discrete random
Schr\"odinger operator of the Anderson model in $d$ dimensions and whose off-diagonal blocks
are i.i.d.\ random multiplication operators. For simplicity and ease of presentation we dispense with the presence of magnetic fields. More precisely, let $U_{0} : \zd \rightarrow\RR$ be a periodic potential with period $p \in \NN^{d}$ and define the bounded periodic background Hamiltonian $H_{0} := -\Delta + U_{0}$ on $\cH= \ell^{2}(\zd)$.
Consider two independent sequences of independent real-valued
random variables $\big(V(j)\big)_{j\in\zd}$ and
$\big(b(j)\big)_{j\in\zd}$ on $\Omega$ such that all the $V(j): \omega \mapsto V_{\omega}(j)$ 's are
identically distributed with law $\mu_{V}$ and all the $b(j): \omega\mapsto b_{\omega}(j)$ 's are identically distributed with
law $\mu_{b}$, and such that the $V$'s are independent of the $b$'s. For the sake of technical
simplicity we assume that the supports of the measures $\mu_{V}$ and $\mu_{b}$ are bounded subsets of $\RR$.
For each elementary event $\omega\in\Omega$ the functions $V_{\omega}: j \mapsto V_{\omega}(j)$ and
$b_{\omega}: j \mapsto b_{\omega}(j)$ give rise to corresponding multiplication operators.  Then the random operators
\begin{equation}
	\label{ran-Schroed}
 H: \omega\mapsto H_{\omega} :=  H_{0} + V_{\omega}
\end{equation}
and
\begin{equation} \label{ran-offdiag}
 b: \omega \mapsto b_{\omega}
\end{equation}
are $\PP$-a.s.\ bounded and self-adjoint on $\ell^{2}(\zd)$.
They are both ergodic w.r.t.\ $p\zd$-translations. From standard results, see e.g.\ \cite{Kir89, CaLa90, PaFi92, Kir08}, we know that their spectra are $\PP$-almost surely given by
\begin{equation}
\label{entry-spec}
\spec(H) = \spec(H_{0}) + \supp(\mu_{V})
\end{equation}
and
\begin{equation}
\spec(b) = \supp(\mu_{b}) .
\end{equation}
The corresponding random block operator
\begin{equation}
 \label{ran-HH-diag}
 \HH: \omega \mapsto \HH_{\omega} := \tmatrix{H_{\omega} & b_{\omega}}{b_{\omega} & -H_{\omega}},
\end{equation}
is also ergodic and $\PP$-a.s.\ bounded and self-adjoint on $\cH^{2}$. The next lemma roughly locates its a.s.\ spectrum.

\begin{lemma}
	\label{loc-spec}
  Let $\HH$ be the random block operator \eqref{ran-HH-diag}. Then
  we have $\PP$-a.s. the inclusions
\begin{equation}
\label{loc-spec-eq}
   \big\{ \pm \textstyle\sqrt{E^{2}  + \beta^{2}} : E \in \spec(H), \beta \in \supp(\mu_{b})\big\}
   \subseteq \spec(\HH) \subseteq[-r,r]
\end{equation}
where $r:= \sup_{E \in \spec(H)}  |E|  + \sup_{\beta \in \supp(\mu_{b})} |\beta|$.
\end{lemma}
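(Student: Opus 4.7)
The upper bound is immediate from Proposition~\ref{spec-include}: indeed, $\|H_\omega\|=\sup_{E\in\spec(H)}|E|$ and $\|b_\omega\|=\sup_{\beta\in\supp(\mu_b)}|\beta|$ almost surely by the spectral identifications stated just above the lemma, so $\spec(\HH_\omega)\subseteq[-r,r]$ almost surely.

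For the reverse inclusion, fix $E\in\spec(H)$ and $\beta\in\supp(\mu_b)$, and set $E_\pm:=\pm\sqrt{E^2+\beta^2}$. By Lemma~\ref{symm-spec} it suffices to show $E_+\in\spec(\HH_\omega)$ almost surely. The plan is to mimic Proposition~\ref{const-weyl} by producing, for every $\epsilon>0$, a unit vector $\tilde\Phi_+\in\cH^2$ with $\tnorm{(\HH_\omega-E_+)\tilde\Phi_+}=O(\epsilon)$. The obstruction is that Proposition~\ref{constantndiag-spec} is formulated for a scalar off-diagonal block $B=\beta\,1$, whereas here $b_\omega$ equals $\beta$ only in distribution; I will get around this by localizing a Weyl sequence of $H$ to a random finite region on which $b_\omega\approx\beta$ and $V_\omega\approx v$ simultaneously, where $v\in\supp(\mu_V)$ is chosen so that $E-v\in\spec(H_0)$.

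First, use \eqref{entry-spec} to decompose $E=E_0+v$ with $E_0\in\spec(H_0)$ and $v\in\supp(\mu_V)$, and apply standard Weyl theory to the bounded self-adjoint operator $H_0$ to obtain a finitely supported unit vector $\psi\in\ell^2(\zd)$ with $\supp\psi\subseteq\Lambda_L$ and $\|(H_0-E_0)\psi\|<\epsilon$. Next, since $(V(j),b(j))_{j\in\zd}$ is an i.i.d.\ process and both $v\in\supp(\mu_V)$ and $\beta\in\supp(\mu_b)$, the event
\[
  A_k:=\bigl\{|V_\omega(k+j)-v|<\epsilon \text{ and } |b_\omega(k+j)-\beta|<\epsilon \text{ for all } j\in\Lambda_L\bigr\}
\]
has positive probability, independent of $k$. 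Restricting $k$ to a sufficiently sparse sub-lattice of $p\zd$ so that the events $A_k$ become independent and then applying Borel--Cantelli supplies, $\PP$-almost surely, some (indeed infinitely many) $k\in p\zd$ for which $A_k$ occurs. Setting $\varphi:=U_k\psi$ and using $[U_k,H_0]=0$ (valid because $U_0$ has period $p$), I obtain $\|(H_\omega-E)\varphi\|\le 2\epsilon$ and $\|(b_\omega-\beta)\varphi\|\le\epsilon$.

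To conclude, if $\beta\ne 0$ I set $\tilde\Phi_+:=\binom{\varphi}{\beta^{-1}(E_+-E)\varphi}$: the algebraic identity driving the proof of Proposition~\ref{const-weyl}, together with the two $\epsilon$-bounds above to absorb the cross terms that would vanish were $b_\omega$ a true scalar, yields $\tnorm{(\HH_\omega-E_+)\tilde\Phi_+}\le C_{E,\beta}\,\epsilon$, and Weyl's criterion puts $E_+$ in $\spec(\HH_\omega)$. In the degenerate case $\beta=0$ one has $E_+=|E|$ and the simpler choice $\tilde\Phi_+:=\binom{\varphi}{0}$ produces $E\in\spec(\HH_\omega)$ directly, with $-E$ following from Lemma~\ref{symm-spec}. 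The step I expect to be the principal obstacle is the localization: arranging enough independence among translates for Borel--Cantelli while respecting the $p\zd$-periodicity of $H_0$ requires some care, but the remainder is a routine variation on Proposition~\ref{constantndiag-spec}.
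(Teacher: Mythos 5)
Your proof is correct and follows essentially the same route as the paper: the right inclusion by Proposition~\ref{spec-include} combined with \eqref{entry-spec}, and the left inclusion by a Weyl-sequence construction in which Borel--Cantelli supplies a region where $V_\omega\approx v$ and $b_\omega\approx\beta$, after which Proposition~\ref{constantndiag-spec}\,\itemref{const-weyl} handles the constant-coefficient block. Your version is in fact a touch more explicit than the paper on the localization step (translating by $U_k$ for $k\in p\zd$ so as to commute with the periodic $H_0$, and enforcing disjointness of the boxes $\Lambda_L(k)$ to obtain the independence needed for Borel--Cantelli); the only small point left implicit is that one should take a countable union over rational $\varepsilon$ and dense sequences in $\spec(H_0)\times\supp(\mu_V)\times\supp(\mu_b)$ to package everything into a single full-measure set $\Omega_0$, as the paper states.
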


\begin{remark}
Since $\sqrt{E^{2} + \beta^{2}} < E + \beta$, if both $E,\beta >0$, we suspect from recalling Remark~\ref{end-sharp} that even the left inclusion in \eqref{loc-spec-eq} might be a strict one in certain cases. This is indeed true, as can be seen by choosing $H = -\Delta$ to be deterministic and $\mu_{b}$ a symmetrical distribution w.r.t.\ reflection at zero. In this situation we may apply Corollary~\ref{unitary} and get $\PP$-a.s.
\begin{equation}
  \spec(\HH) = \bigcup_{\kappa = \pm} \spec(H_{\kappa}) = \spec(\Delta) + \supp (\mu_{b}).
\end{equation}
\end{remark}

Combining Lemma~\ref{loc-spec} with Lemma~\ref{gap} we infer

\begin{corollary}
 	\label{gap-end}
	Assume that $\inf\spec(H) \ge0$, $\inf\supp(\mu_{b}) \ge 0$ 	and
	\begin{equation}
 		\lambda_{\pm} := \pm \sqrt{[\inf\spec(H)]^{2} +
		[\inf\supp(\mu_{b})]^{2}
		} >0.
	\end{equation}
	Then $\lambda_{-}$ and $\lambda_{+}$ are the endpoints of the open gap interval which separates the positive and negative parts of the a.s.\ spectrum of $\HH$.
\end{corollary}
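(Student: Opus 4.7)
The plan is to combine the two one-sided statements already at our disposal: Proposition~\ref{spec-gap}\itemref{gap} gives the existence of the open interval $]\lambda_-,\lambda_+[$ as a spectral gap, while Lemma~\ref{loc-spec} together with Lemma~\ref{symm-spec} pins the gap endpoints down to be exactly $\lambda_\pm$.

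\textbf{Step 1: the interval is in the resolvent set.} Set $\lambda := \inf\spec(H)\ge 0$ and $\beta := \inf\supp(\mu_b) \ge 0$, so $\lambda_+ = (\lambda^2+\beta^2)^{1/2}$. By ergodicity the almost-sure spectrum of $H$ equals $\spec(H)$, hence $H_\omega \ge \lambda\,\one$ for $\PP$-a.e.\ $\omega$. Likewise, since the multiplication operator $b_\omega$ has spectrum $\overline{\{b_\omega(j):j\in\zd\}}$, which $\PP$-a.s.\ coincides with $\supp(\mu_b)$, we have $b_\omega \ge \beta\,\one$ for $\PP$-a.e.\ $\omega$. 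Proposition~\ref{spec-gap}\itemref{gap} applied $\omega$-wise then yields
\begin{equation}
 \spec(\HH_\omega)\,\cap\,]-\lambda_+,\lambda_+[\,=\,\emptyset \qquad \text{for $\PP$-a.e.\ $\omega\in\Omega$,}
\end{equation}
and therefore $\Sigma\cap\,]\lambda_-,\lambda_+[\,=\emptyset$ for the a.s.\ spectrum $\Sigma$.

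\textbf{Step 2: the endpoints lie in the spectrum.} Since $\spec(H)$ and $\supp(\mu_b)$ are closed, we have $\lambda\in\spec(H)$ and $\beta\in\supp(\mu_b)$. The left inclusion in \eqref{loc-spec-eq} of Lemma~\ref{loc-spec} then gives $\lambda_+ = \sqrt{\lambda^2+\beta^2}\in\Sigma$. The symmetry of $\Sigma$ around $0$, which follows from Lemma~\ref{symm-spec} applied $\omega$-wise on the event of full probability on which $\spec(\HH_\omega)=\Sigma$, implies $\lambda_- = -\lambda_+\in\Sigma$ as well.

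\textbf{Step 3: conclusion.} The hypothesis $\lambda_+>0$ ensures that the gap interval is non-degenerate. Combining Steps 1 and 2, we obtain
\begin{equation}
 \Sigma\,\cap\,[\lambda_-,\lambda_+]\,=\,\{\lambda_-,\lambda_+\},
\end{equation}
so that $]\lambda_-,\lambda_+[$ is a spectral gap of $\HH$ whose endpoints $\lambda_\pm$ belong to $\Sigma$; these endpoints therefore separate the negative and positive parts of $\Sigma$, as claimed.

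There is no real obstacle here: the content is packaged into Proposition~\ref{spec-gap}\itemref{gap} and Lemma~\ref{loc-spec}, and the only minor point to be careful about is to invoke the ergodic identification of $\spec(H_\omega)$ and $\spec(b_\omega)$ with their non-random counterparts before applying the pointwise bounds $H_\omega\ge\lambda$ and $b_\omega\ge\beta$ needed to feed Proposition~\ref{spec-gap}\itemref{gap}.
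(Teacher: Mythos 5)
Your proposal is correct and follows essentially the same route the paper intends: the paper simply writes ``Combining Lemma~\ref{loc-spec} with Lemma~\ref{gap} we infer'' the Corollary and leaves the details to the reader, which is exactly the combination you carry out. One very minor remark: the appeal to Lemma~\ref{symm-spec} in Step~2 is dispensable, since the left inclusion in \eqref{loc-spec-eq} already contains both $+\sqrt{\lambda^2+\beta^2}$ and $-\sqrt{\lambda^2+\beta^2}$ and thus directly yields $\lambda_-\in\Sigma$.
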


\begin{proof}[Proof of Lemma~\ref{loc-spec}]
While the right inclusion follows immediately from Lemma~\ref{spec-include}	and \eqref{entry-spec}, the left inclusion is based on a typical Weyl-sequence argument.
 Let $(\varphi_{n})_{n\in\NN}$ be a Weyl sequence of normalised vectors for $H_{0}$ and some fixed energy $E_{0} \in \spec(H_{0})$. Since $H_{0}$ is a bounded operator we may assume without loss of generality that each $\varphi_{n} \neq 0$ has compact support.
 The Borel-Cantelli lemma implies the existence of a set $\Omega_{0}$ of full probability, $\PP(\Omega_{0}) =1$, such that for every $\omega\in\Omega_{0}$, for every length $L>0$, for every $\varepsilon >0$, for every $v \in \supp \mu_{V}$ and every $\beta\in\supp \mu_{b}$ there exist $k \in\zd$ with
\begin{equation}
\label{approx}
 	|V_{\omega}(j) -v | < \varepsilon \quad \text{and} \quad |b_{\omega}(j) -\beta| < \varepsilon \quad \text{for all~} j\in \Lambda_{L}(k).
\end{equation}
\noindent
Now fix also $v \in \supp \mu_{V}$ and $\beta\in\supp \mu_{b}$. We define
\begin{align}
E_{\pm} := \pm \sqrt{(E_{0}+ v)^{2} + \beta^{2}},
\end{align}
and, if $\beta\neq 0$, the vector
\begin{align}
\Phi_{n}^{\pm} := \mathcal{N}_{n}\, \tmatrix{\varphi_{n}}{\beta^{-1}(E_{\pm} -E_{0}- v)\varphi_{n}},
\end{align}
where the constant $\mathcal{N}_{n}$ ensures proper normalisation $\tnorm[normal]{\Phi_{n}^{\pm}} =1$. If $\beta =0$, we set $\Phi_{n}^{+} := \binom{\varphi_{n}}{0}$ and $\Phi_{n}^{-} := \binom{0}{\varphi_{n}}$.
For every $n\in\NN$ we have the estimate
\begin{align}
 &\tnorm[normal]{(\HH_{\omega}  - E_{\pm}) \Phi_{n}^{\pm}} \nonumber \\
 &\qquad\le\phantom{:}
 \tnorm{\textstyle \tmatrix{V_{\omega} -v & b_{\omega} -\beta}{b_{\omega}-\beta & \,-(V_{\omega} -v)} \Phi_{n}^{\pm}} + \tnorm{ \, \left[\tmatrix{H_{0} +v & \beta}{\beta & \,-(H_{0} +v)}  - E_{\pm} \right]\Phi_{n}^{\pm}}
 \nonumber\\
 & \qquad = : \;T_{n}^{(1)} + T_{n}^{(2)}
\end{align}
and the identity
\begin{align}
\big(T_{n}^{(1)}\big)^{2}   = \dscalar{\Phi_{n}^{\pm}}{ \tmatrix{ (V_{\omega}-v)^{2} + (b_{\omega}- \beta)^{2}& 0 }{ 0  &(V_{\omega}-v)^{2} + (b_{\omega}- \beta)^{2}} \Phi_{n}^{\pm}}.
\end{align}
For every $n\in\NN$ and every $\varepsilon>0$ we exploit the freedom to shift the support of $\varphi_{n}$ such that it lies inside some large enough box $\Lambda_{L}(k)$ for which \eqref{approx} holds. This implies
\begin{align}
 \big(T_{n}^{(1)}\big)^{2} \le 2 \varepsilon^{2} \llangle \Phi_{n}^{\pm},\Phi_{n}^{\pm}\rrangle  = 2\varepsilon^{2}.
\end{align}
On the other hand, given any $\varepsilon >0$, we infer from the Weyl-sequence property of $(\varphi_{n})_{n\in\NN}$ for $H_{0}$ and Proposition~\ref{constantndiag-spec} that there exists $n_{\varepsilon}\in\NN$ such that for all $n \ge n_{\varepsilon}$ the estimate $T_{n}^{(2)} \le C \varepsilon$ holds with some constant $C>0$ depending on $E_{0}, v$ and $\beta$. This proves $\tnorm[normal]{(\HH_{\omega} - E_{\pm}) \Phi_{n}^{\pm}} \le (C+\sqrt{2}) \varepsilon$, and hence $(\Phi_{n}^{\pm})_{n\in\NN}$ is a Weyl sequence for $\HH$ and $E_{\pm}$.
\end{proof}

In order to make manifest the interpretation of the integrated density of states $\IZ$ as an eigenvalue counting function we have to introduce appropriate finite-volume restrictions of the block operator $\HH$.

\begin{definition}
  For $\HH$ as in \eqref{ran-HH-diag} and a finite cube $\Lambda_{L} \subset \ZZ^{d}$ we introduce the
  \emph{Dirichlet} and \emph{Neumann} restrictions
  \begin{equation}
  	\label{Dir-op}
 		\HHL_{D} := \sma{\smash{H^{(L)}_{D}}}{b} \qquad \text{and} \qquad
		\HHL_{N} := \sma{\smash{H^{(L)}_{N}}}{b}
	\end{equation}
	of $\HH$ to the $2|\Lambda_{L}|$-dimensional Hilbert space $\cH^{2}_{L} := \ell^{2}(\Lambda_{L}) \oplus \ell^{2}(\Lambda_{L})$. We also introduce
	the \emph{Dirichlet-bracketing} and \emph{Neumann-bracketing} restrictions
	\begin{equation}
 		\HHL_{+} := \tmatrix{\smash{H^{(L)}_{D}} & b}{b & -\smash{H^{(L)}_{N}}} \qquad \text{and} \qquad
		\HHL_{-} := \tmatrix{\smash{H^{(L)}_{N}} & b}{b & -\smash{H^{(L)}_{D}}}.
	\end{equation}
	In the above we have used the Dirichlet, resp.\ Neumann, restriction
\begin{equation}
 	H^{(L)}_{D/N} := -\Delta_{D/N}^{(L)} + U_{0} + V
\end{equation}
of $H$ to $\ell^{2}(\Lambda_{L})$, and all multiplication operators are to be interpreted as canonical restrictions.
\end{definition}

\begin{remarks}
 	\item The Neumann Laplacian on $\ell^{2}(\Lambda_{L})$ is also called \emph{graph Laplacian} or
		\emph{combinatorial Laplacian}. It is defined by
		\begin{equation}
 			(-\Delta_{N}^{(L)} \psi)(j) := \sum_{k\in \Lambda_{L}: |j-k|=1} [\psi(j) - \psi(k)]
		\end{equation}
		for every $\psi \in \ell^{2}(\Lambda_{L})$ and every $j\in\Lambda_{L}$.
		The Dirichlet Laplacian can be represented as the perturbation
		\begin{equation}
			\label{boundaries-relate}
 			-\Delta_{D}^{(L)} := -\Delta_{N}^{(L)} + 2\Gamma^{(L)},
		\end{equation}
		where $\Gamma^{(L)}$ is the multiplication operator given by
		\begin{equation}
 			(\Gamma^{(L)}\psi)(j) := \psi(j) \; \# \{k \in \ZZ^{d} \setminus \Lambda_{L} : |k-j| =1 \}
		\end{equation}
		for every $j\in\Lambda_{L}$. In other words, $\Gamma^{(L)}$ lives only on the outermost layer of $\Lambda_{L}$ and multiplies by the number of missing neighbours at every point.
	\item The finite-volume restrictions $-\Delta_{N}^{(L)}$ and $-\Delta_{D}^{(L)}$ are defined such as to obey Dirichlet-Neumann bracketing. Consequently we obtain the chains of inequalities
		\begin{equation}
 			\HHL_{-} \le \HHL_{N} \le \HHL_{+} \quad\text{and}\quad
				\HHL_{-} \le \HHL_{D} \le \HHL_{+}
		\end{equation}
		for every $\Lambda_{L} \subset \ZZ^{d}$ and the Dirichlet-Neumann bracketing properties
		\begin{equation}
 			\HH^{(\Lambda)}_{-} \oplus \HH^{(\Lambda')}_{-} \le \HH^{(\Lambda \cup \Lambda')}_{-}
			\quad\text{and}\quad
			\HH^{(\Lambda)}_{+} \oplus \HH^{(\Lambda')}_{+} \ge \HH^{(\Lambda \cup \Lambda')}_{+}
		\end{equation}
		for all disjoint $\Lambda, \Lambda' \subset \ZZ^{d}$ (in obvious abuse of our notation).
\end{remarks}

The desired interpretation of $\IZ$ follows from

\begin{lemma}
Given the random block operator $\HH$ from \eqref{ran-HH-diag} and any $E\in\RR$, we define the random, finite-volume eigenvalue counting function
\begin{equation}
 \IZ^{(L)}_{X}(E) := \frac{1}{2\,|\Lambda_{L}|} \; \tr_{\cH^{2}_{L}} \big[ \Chi_{]-\infty, E]}	(\HH_{X}^{(L)}) \big],
\end{equation}
where $X$ symbolises any self-adjoint restriction such that
\begin{equation}
	\label{interpol}
 	\HHL_{-} \le \HHL_{X} \le \HHL_{+}.
\end{equation}
Then
\begin{nummer}
\item there is a set $\Omega_{0} \subseteq \Omega$
	of full probability, $\PP(\Omega_{0})=1$, such that
	\begin{equation}
		\label{IDS-diag-limit}
 		\IZ(E) = \lim_{L\to\infty} \IZ^{(L)}_{X,\omega}(E)
	\end{equation}
	for every $\omega\in\Omega_{0}$, every continuity point $E\in\RR$ of $\IZ$ and every boundary condition $X$ satisfying \eqref{interpol}.
\item In addition we have the bounds
	\begin{equation}
		\label{Dir-Neu-ids}
 		\EE \big[ \IZ^{(L)}_{+}(E) \big]
  \le \IZ(E) \le \EE \big[  \IZ^{(L)}_{-}(E)\big]
	\end{equation}
	for every finite cube $\Lambda_{L}$ and every $E\in\RR$.
\end{nummer}
\end{lemma}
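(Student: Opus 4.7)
The plan is to combine Dirichlet--Neumann bracketing in the spirit of Kirsch--Martinelli with the multiparameter subadditive ergodic theorem of Akcoglu--Krengel, in direct analogy to the scheme for ordinary random Schr\"odinger operators (see Sect.~5 of \cite{Kir08} or \cite{KiMe07}).

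For part~(ii), I would first verify the bracketing $\HHL_{-} \le \HHL_{X} \le \HHL_{+}$ by computing $\HHL_{+} - \HHL_{-} = 2\tmatrix{\Gamma^{(L)} & 0}{0 & \Gamma^{(L)}} \ge 0$ from \eqref{boundaries-relate}; a partition of $\zd$ into disjoint translates of $\Lambda_{L}$ then lifts this, in the quadratic form sense, to the sandwich $\bigoplus_{k}\HH^{(\Lambda_{L}(k))}_{-} \le \HH \le \bigoplus_{k}\HH^{(\Lambda_{L}(k))}_{+}$ on $\cH^{2}$. Applying the min--max principle to the eigenvalue counting functions and restricting to the $\Lambda_{L}$-block yields
\begin{equation}
\tr_{\cH^{2}_{L}}\!\bigl[\Chi_{]-\infty,E]}(\HHL_{+})\bigr]
\;\le\; \tr_{\cH^{2}}\!\bigl[\Chi_{\Lambda_{L}}\Chi_{]-\infty,E]}(\HH)\bigr]
\;\le\; \tr_{\cH^{2}_{L}}\!\bigl[\Chi_{]-\infty,E]}(\HHL_{-})\bigr].
\end{equation}
Dividing by $2|\Lambda_{L}|$, taking $\EE[\cdot]$, and invoking $p\zd$-covariance together with the definition \eqref{IDS-def-eq} to identify the middle term with $\IZ(E)$ for any $\Lambda_{L}$ that is a disjoint union of $p\zd$-translates of $\Lambda_{0}$ proves \eqref{Dir-Neu-ids}.

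For part~(i), I would observe that $\Lambda \mapsto \tr[\Chi_{]-\infty,E]}(\HH^{(\Lambda)}_{+})]$ is superadditive and $\Lambda \mapsto \tr[\Chi_{]-\infty,E]}(\HH^{(\Lambda)}_{-})]$ is subadditive, both inherited from the direct-sum inequalities stated in the preceding remark via min--max, and both uniformly bounded by twice the cube volume. The Akcoglu--Krengel multiparameter ergodic theorem combined with the ergodicity of $\{\tau_{j}\}_{j\in p\zd}$ therefore provides a $\PP$-full-measure set $\Omega_{0}$ on which, for each fixed $E\in\RR$, the two normalised counting functions $\IZ^{(L)}_{\pm,\omega}(E)$ converge to deterministic limits $\IZ_{\pm}(E)$; dominated convergence and \eqref{Dir-Neu-ids} then force $\IZ_{+}(E) \le \IZ(E) \le \IZ_{-}(E)$.

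The main obstacle is to close this sandwich, i.e.\ to show $\IZ_{+}(E) = \IZ_{-}(E)$ at continuity points of $\IZ$. I would derive this from the rank bound $\mathrm{rank}(\HHL_{+} - \HHL_{-}) = 2\,\mathrm{rank}(\Gamma^{(L)}) = \mathcal{O}(L^{d-1})$ computed above; the interlacing inequality for eigenvalue counting functions under finite-rank perturbations then gives $|\tr\Chi_{]-\infty,E]}(\HHL_{-}) - \tr\Chi_{]-\infty,E]}(\HHL_{+})| = \mathcal{O}(L^{d-1}) = o(L^{d})$, so that $\IZ_{+}(E) = \IZ_{-}(E) = \IZ(E)$ almost surely. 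For an arbitrary boundary condition $X$ satisfying \eqref{interpol}, the min--max principle sandwiches $\IZ^{(L)}_{X,\omega}(E)$ between $\IZ^{(L)}_{+,\omega}(E)$ and $\IZ^{(L)}_{-,\omega}(E)$, and passing first to a countable dense set of continuity points of the monotone function $\IZ$ and then back by monotonicity allows $\Omega_{0}$ to be chosen independently of $E$ and $X$, completing \eqref{IDS-diag-limit}.
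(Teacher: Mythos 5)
Your proof is correct and implements precisely the standard Dirichlet--Neumann bracketing plus Akcoglu--Krengel subadditive-ergodic-theorem argument that the paper itself invokes only by reference to \cite{Kir89,CaLa90,PaFi92,KiMe07,Kir08}, so the approach is essentially identical. The only cosmetic gap is that your identification of the middle term with $\IZ(E)$ in part (ii) is carried out for cubes $\Lambda_L$ commensurate with the period $p$, whereas the statement claims \eqref{Dir-Neu-ids} for every finite cube; this is closed by the standard observation that $j\mapsto\EE[\tr(\Chi_{\{j\}}\Chi_{]-\infty,E]}(\HH))]$ is $p\zd$-periodic and one simply averages over a fundamental domain, but it deserves a word.
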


\begin{proof}
The arguments proceed as in the standard random Schr\"odinger case, see e.g.\ \cite{Kir89, CaLa90, PaFi92,KiMe07,Kir08}.
\end{proof}


\section{Random diagonal $B$: boundedness of the density of states}
\label{sec:wegner}


We have seen in Section~\ref{constndiag} that the \emph{density} of states of $\HH$ with off-diagonal operators $B$ that are a constant multiple of the identity can
have a singularity, even if the density of states of $H$ is bounded.
In this section we consider random operators $\HH$ of the form \eqref{ran-HH-diag} with diagonal disorder in all blocks.
We will prove a Wegner estimate for such models. It implies that the density of states of $\HH$ exists and is bounded provided $H$ or $b$ is bounded away from zero and the distribution of the random variables is absolutely continuous with a suitably regular Lebesgue density. Technically, one has to cope with a non-monotone dependence on the random potential: the single-site potential that enters $\HH$ is sign-indefinite and has mean zero. In contrast, for ordinary (i.e.\ non-block) random Schr\"odinger operators with such single-site potentials it is not known how to prove a comparable Wegner estimate. Existing Wegner estimates involve either higher powers of the volume \cite{Ves10} or yield only log-H\"older continuity of the integrated density of states \cite{Kru10}.

\begin{figure}
  \begin{center}
    \begin{tabular}{@{}cc@{}}
      \resizebox{61mm}{!}{\includegraphics{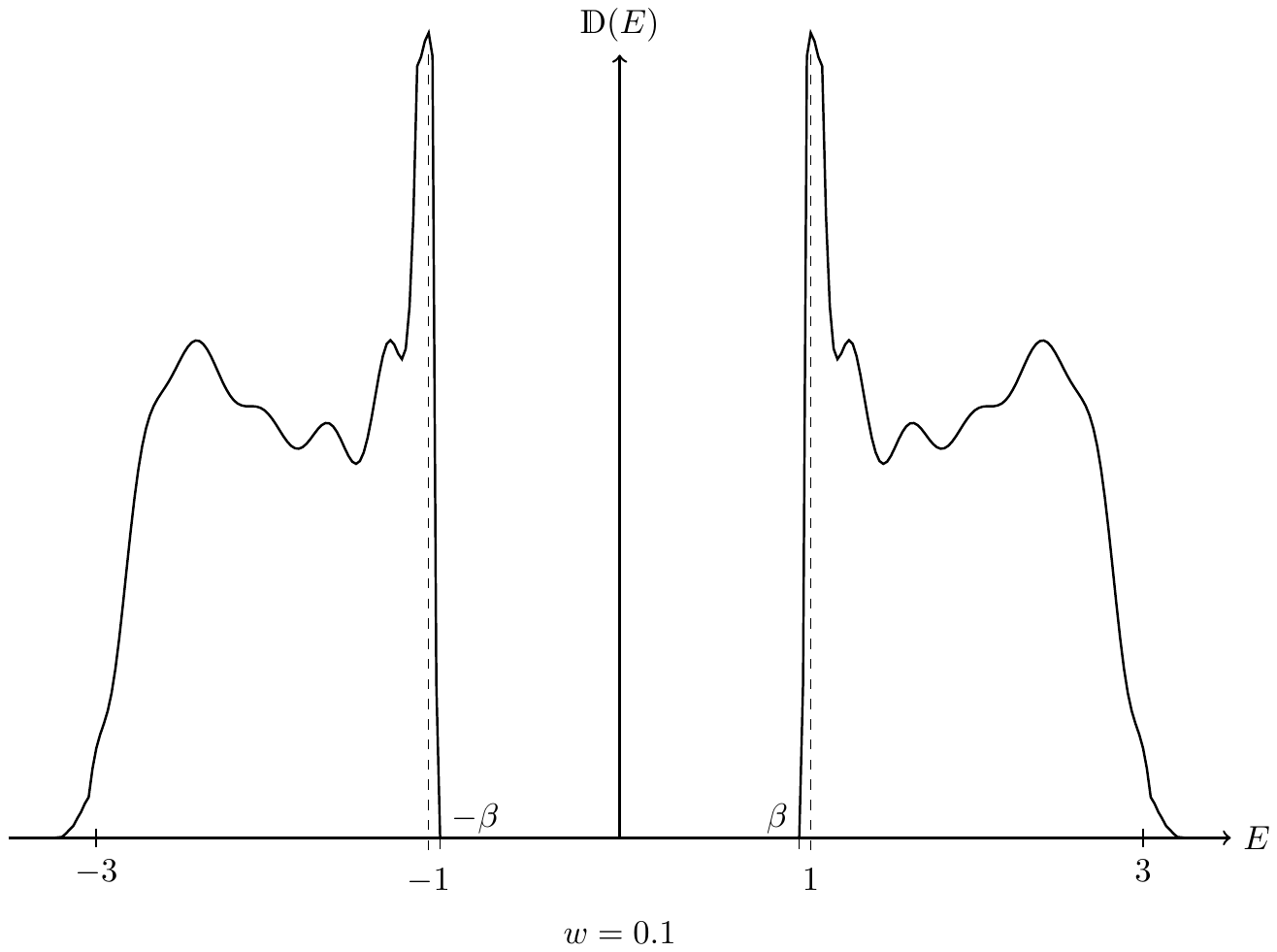}} &
      \resizebox{61mm}{!}{\includegraphics{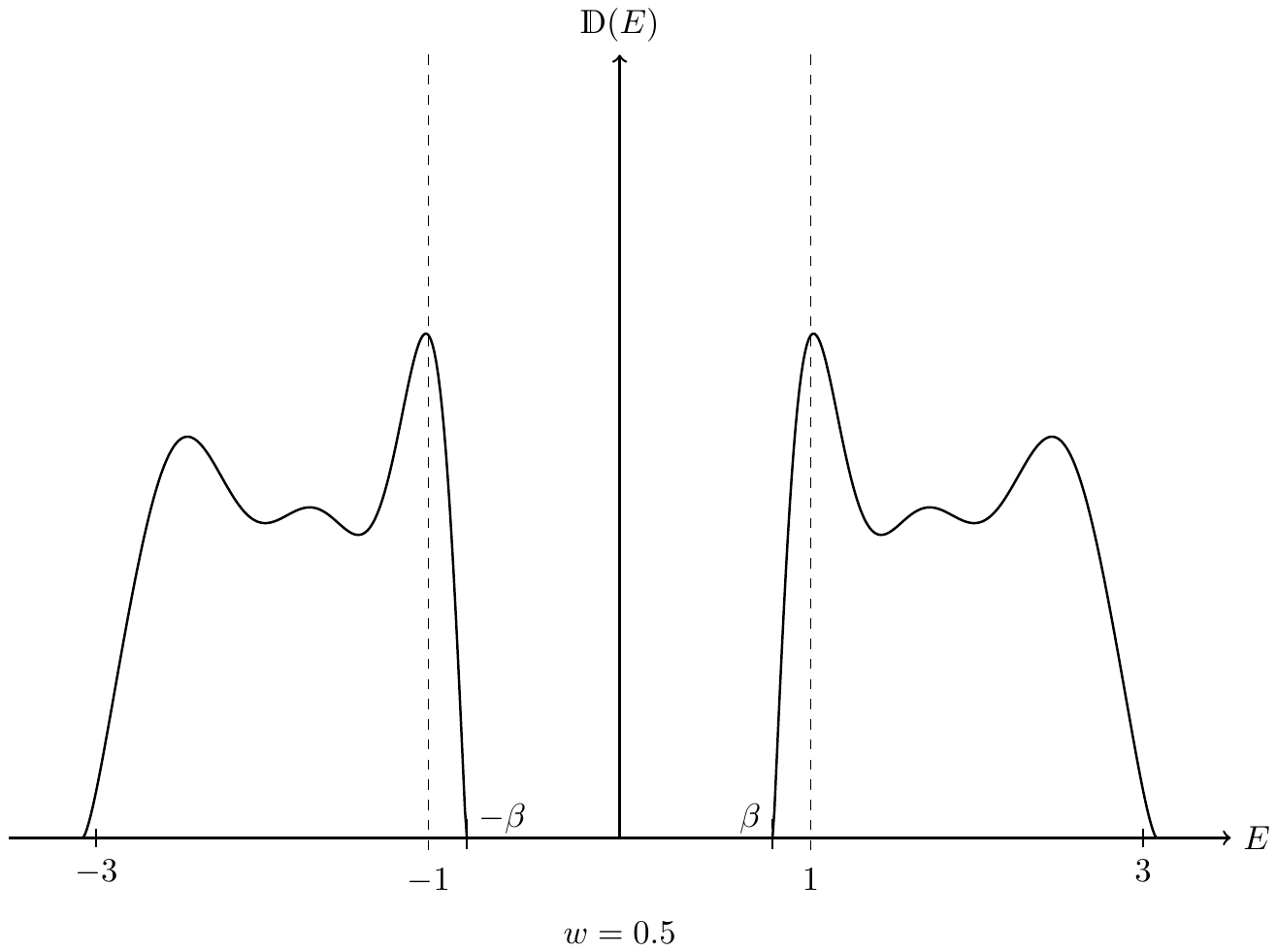}} \\
      \resizebox{61mm}{!}{\includegraphics{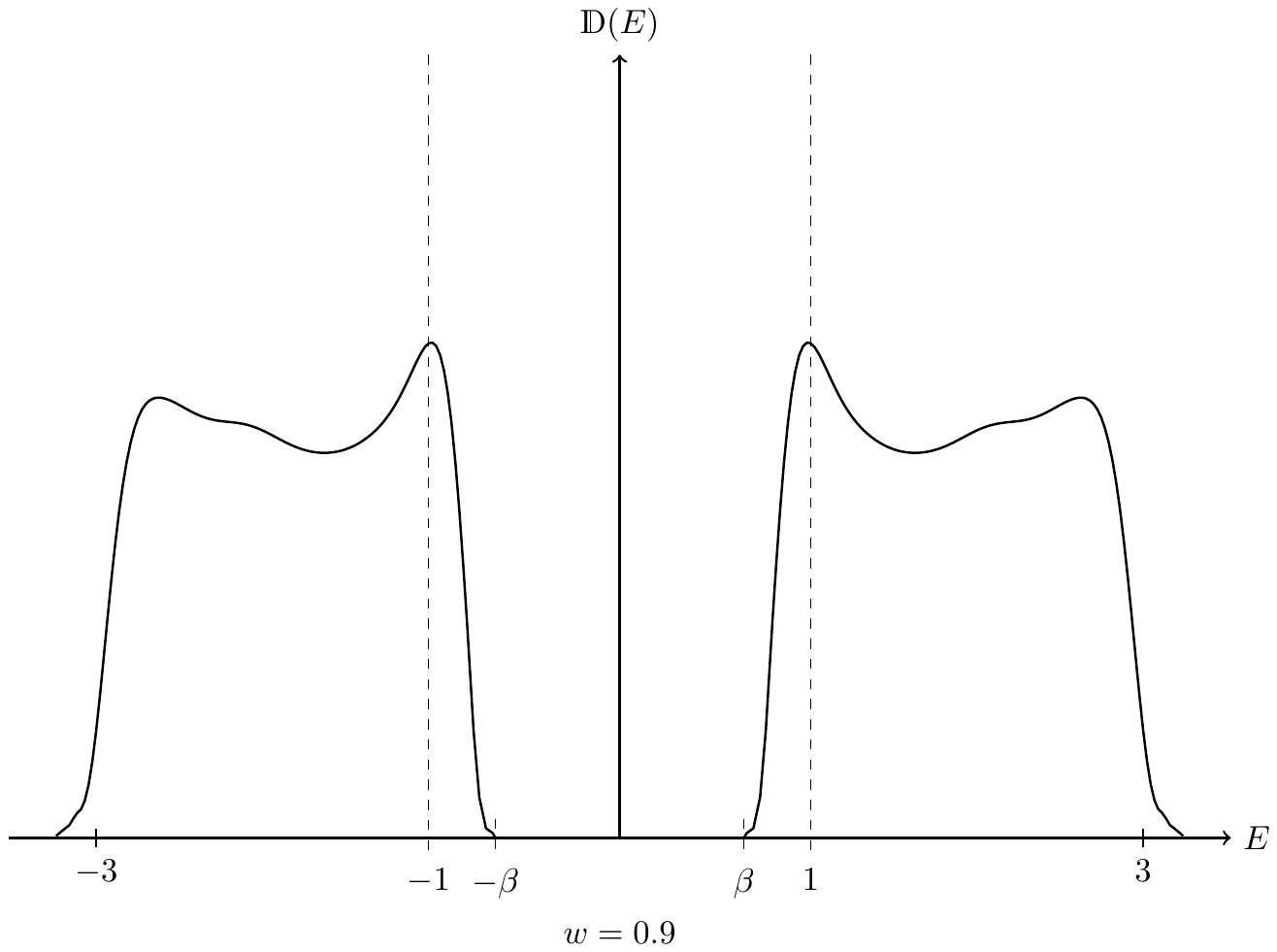}} &
      \resizebox{61mm}{!}{\includegraphics{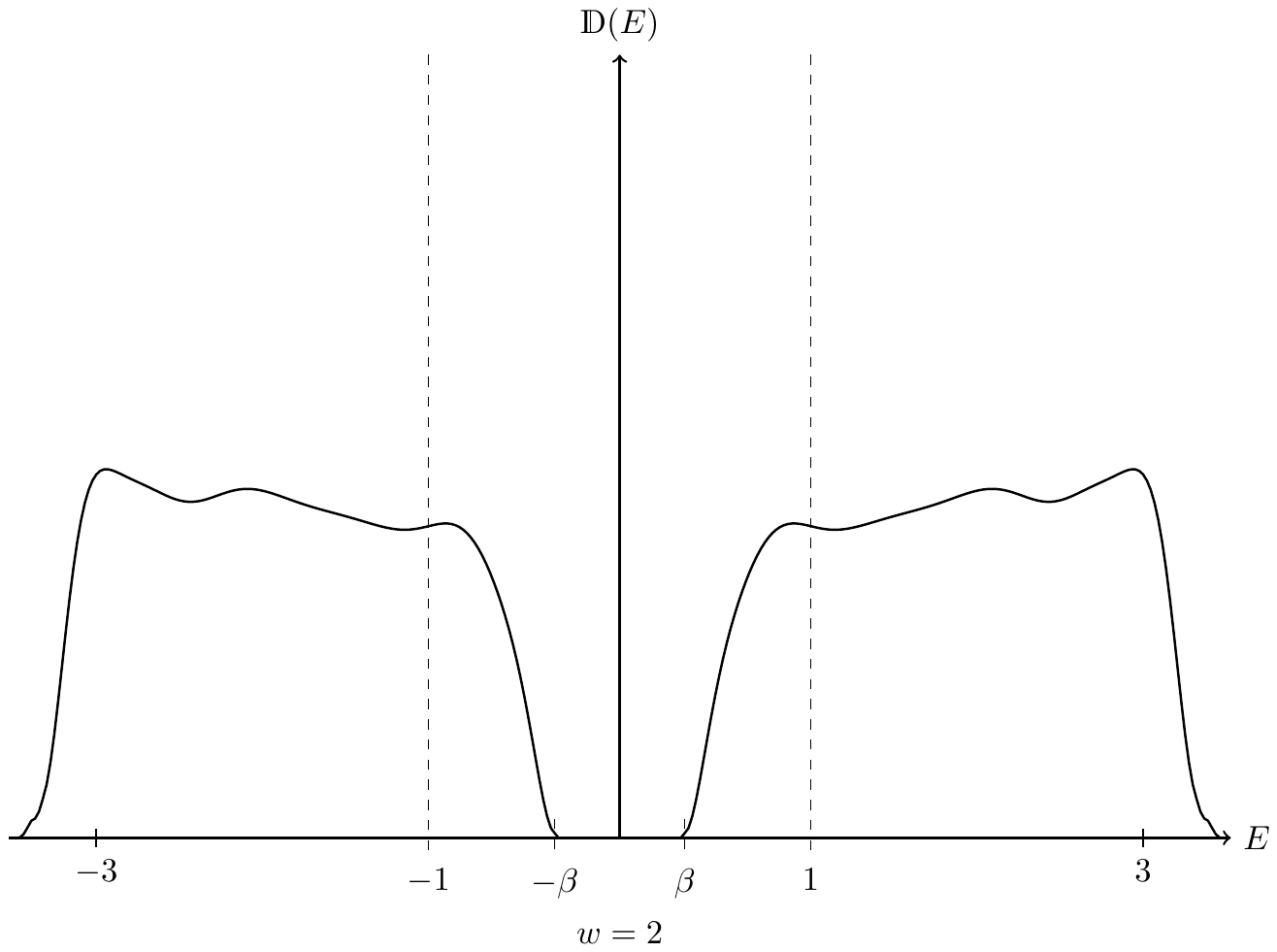}}
    \end{tabular}
    \caption{Assuming \protect{$b_{\omega}(j)=1+ w \:\wtilde{b}_{\omega}(j)$} with $\wtilde{b}_{\omega}(j)$ distributed uniformly in the interval $[-0.5,0.5]$ we display the dependence of the density of states on the disorder parameter $w$. (Assumption~(B) of Theorem~\ref{Wegner} is satisfied for $w=0.1,0.5,0.9$).}
    \label{fig:smooth}
  \end{center}
\end{figure}

\begin{theorem}\label{Wegner}
Consider the random block operator
\begin{equation}
\HH : \omega\mapsto \HH_{\omega} =\sma{H_{\omega}}{b_{\omega}}
\end{equation}
where $H$ and $b$ are given as in \eqref{ran-Schroed} and \eqref{ran-offdiag}. Assume that at least one of the following two conditions is satisfied:
\par\smallskip\noindent
\textup{(H)}  \quad  \parbox[t]{11.5cm}{there exists $\lambda>0$ such that $H\ge \lambda 1$ holds $\PP$-a.s.\ and $\mu_V$ is absolutely continuous with a piecewise continuous Lebesgue density $\phi_{V}$ of bounded variation and compact support,}   \\[1ex]
\textup{(B)} \quad  \parbox[t]{11.5cm}{there exists $\beta>0$ such that $b\ge \beta 1 $ holds
$\PP$-a.s.\ and $\mu_b$ is absolutely continuous with a piecewise continuous Lebesgue density $\phi_{b}$ of bounded variation and compact support.}
\par\smallskip\noindent
Then the integrated density of states $\IZ$ of $\HH$ is Lipschitz continuous and has a bounded density $\DD:= \d\IZ/\d E$. In particular, we have for Lebesgue-almost all $E\in\RR$
\begin{equation}
\DD(E) \leq 2\,\frac{|E| + 1}{\lambda}\, \|\phi_V\|_{\textsc{bv}},
\end{equation}
if Condition~\textup{(H)} applies, and
\begin{equation}
\DD(E) \leq 2\,\frac{|E| + 1}{\beta}\,  \|\phi_b\|_{\textsc{bv}} ,
\end{equation}
if Condition~\textup{(B)} applies. Here, $\|f\|_{\textsc{bv}}$ denotes the total variation of $f:\RR \rightarrow \CC$.
\end{theorem}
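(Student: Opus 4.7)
The plan is to prove the finite-volume Wegner-type bound $\EE[\tr\chi_{I}(\HHL_{D})]\le C(|E|+1)\,\eta\,\|\phi\|_{\textsc{bv}}|\Lambda_{L}|$ for $I=[E-\eta,E+\eta]$, with $C\propto 1/\lambda$ under (H) (respectively $\propto 1/\beta$ under (B)); Lipschitz continuity of $\IZ$ and the stated bound on $\DD$ follow from \eqref{Dir-Neu-ids} and Lebesgue differentiation. By the particle--hole symmetry of Lemma~\ref{symm-spec} we may take $E\ge 0$, and Proposition~\ref{spec-gap}\itemref{gap-H}/\itemref{gap-B} lets us further assume $E\ge\lambda$ under (H), or $E\ge\beta$ under (B). Squaring puts the positive eigenvalues of $\HHL_{D}$ in $I$ in bijection with the eigenvalues of $(\HHL_{D})^{2}\ge\lambda^{2}$ inside $J:=[(E-\eta)^{2},(E+\eta)^{2}]$, an interval of length $4E\eta$, so that $2\tr\chi_{I}(\HHL_{D})=\tr\chi_{J}((\HHL_{D})^{2})$; it therefore suffices to bound $\EE[\tr\chi_{J}((\HHL_{D})^{2})]$.

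The core input is a collective monotonicity read off from the eigenvalue equations. Writing $\HHL_{D}\binom{\varphi}{\chi}=\mu\binom{\varphi}{\chi}$ with $\mu>0$ componentwise, pairing with $\varphi$ and $\chi$ respectively and subtracting cancels the $b$-contributions and yields
\begin{equation*}
  \langle\varphi,H^{(L)}_{D}\varphi\rangle+\langle\chi,H^{(L)}_{D}\chi\rangle \;=\; \mu\bigl(\|\varphi\|^{2}-\|\chi\|^{2}\bigr).
\end{equation*}
Under (H) the left side is $\ge\lambda$ on a normalised eigenvector, so $\|\varphi\|^{2}-\|\chi\|^{2}\ge\lambda/\mu$. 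The sign-indefinite single-site Hellmann--Feynman derivative $\partial_{V(j)}\mu=|\varphi(j)|^{2}-|\chi(j)|^{2}$, which is exactly the announced obstacle to standard Wegner methods, nevertheless satisfies
\begin{equation*}
  \sum_{j\in\Lambda_{L}}\partial_{V(j)}\mu^{2} \;=\; 2\mu\bigl(\|\varphi\|^{2}-\|\chi\|^{2}\bigr) \;\ge\; 2\lambda.
\end{equation*}
The analogous pairing -- first row with $\chi$, second with $\varphi$, and adding -- produces under (B) the identity $\langle\varphi,b\varphi\rangle+\langle\chi,b\chi\rangle=2\mu\,\mathrm{Re}\langle\varphi,\chi\rangle\ge\beta$, whence $\sum_{j}\partial_{b(j)}\mu^{2}\ge 2\beta$. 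Thus in both cases the summed perturbation moves $\mu^{2}$ at a definite positive rate, supplying a substitute for the classical Combes--Hislop covering condition.

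With this aggregate positivity available, let $f\colon\RR\to[0,4E\eta]$ be an antiderivative of $\chi_{J}$; the previous inequality gives $\tr\chi_{J}((\HHL_{D})^{2})\le(2\lambda)^{-1}\sum_{j}\partial_{V(j)}\tr f((\HHL_{D})^{2})$. Taking $\EE$ and integrating by parts against the compactly supported BV density $\phi_{V}$ in each variable $V(j)$ produces a sum of terms $-\EE[\int\tr f((\HHL_{D})^{2})\,\mathrm{d}\phi_{V}(V(j))]$. Here lies the only delicate point of the proof: the naive estimate $\sup\tr f\le 4E\eta\cdot 2|\Lambda_{L}|$ would introduce a spurious factor $|\Lambda_{L}|$. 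The remedy combines the cancellation $\int\mathrm{d}\phi_{V}=0$ (by compact support) with Krein's spectral-shift bound. Fixing all other random variables, the dependence of $(\HHL_{D})^{2}$ on the single variable $V(j)$ is through a perturbation of rank at most six -- coming from $\{H^{(L)}_{D},e_{j}e_{j}^{*}\}\otimes\one_{2}$ and $(e_{j}e_{j}^{*})\otimes\one_{2}$ -- so the spectral shift satisfies $\|\xi\|_{\infty}\le 6$, and $\tr f((\HHL_{D})^{2})$ varies by at most $6\|f\|_{\textsc{bv}}=24E\eta$ as $V(j)$ runs over $\supp\mu_{V}$. Subtracting that constant from the integrand (permissible since $\int\mathrm{d}\phi_{V}=0$) and applying $|\int g\,\mathrm{d}\phi_{V}|\le\|g\|_{\infty}\|\phi_{V}\|_{\textsc{bv}}$ yields a per-site bound of order $E\eta\,\|\phi_{V}\|_{\textsc{bv}}$. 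Summing over the $|\Lambda_{L}|$ sites of $\Lambda_{L}$ and accounting for the factor $2$ from the squaring step produces the advertised Wegner estimate; case (B) is handled identically with $\partial_{b(j)}$ in place of $\partial_{V(j)}$ and $\beta$ in place of $\lambda$.
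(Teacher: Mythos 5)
Your proof is correct and in essence follows the paper's own route: you rely on the same two cornerstones, namely the aggregate positivity $\mu\sum_{j}\partial_{V_j}\mu\ge\lambda$ (the paper's Lemma~\ref{wegnerhilf}, which you derive from the same eigenvector pairing) and the bounded-variation device to cope with the sign-indefinite single-site dependence (the paper's Lemma~\ref{wegner-hilf-2}). The differences are presentational rather than substantive: you pass through the square $(\HHL_{D})^{2}$ and the antiderivative $f$ of $\chi_{J}$ where the paper works directly with $\HHL_{D}$ and a smooth switch function $\varrho$; you obtain the oscillation bound on $\tr f((\HHL_{D})^{2})$ from Krein's spectral shift for a rank-$\le 6$ perturbation and the cancellation $\int\mathrm{d}\phi_{V}=0$, where the paper bounds $\tr\varrho(\HHL_{D}-\eta)$ directly using that $V(j)$ enters as a rank-$2$ perturbation and then applies Lemma~\ref{wegner-hilf-2}; and you treat case~(B) by a parallel Hellmann--Feynman computation pairing the eigenvalue equations crosswise, where the paper simply invokes the unitary swap of Lemma~\ref{diag-offdiag} to reduce~(B) to~(H). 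These reorganizations give the same structure of bound (linear in the energy window, volume and $\|\phi\|_{\textsc{bv}}$, with a $1/\lambda$ prefactor) with a somewhat larger numerical constant. One small point worth spelling out when invoking the Hellmann--Feynman formula on the squared operator: the eigenvalue $\mu^{2}$ of $(\HHL_{D})^{2}$ is generically doubly degenerate (coming from $\pm\mu$), so one should really track the non-degenerate branches $\pm\mu(\Vv)$ of $\HHL_{D}$ and note that both yield $\sum_{j}\partial_{V_j}\mu^{2}\ge 2\lambda$, rather than differentiate the degenerate eigenvalue of $(\HHL_{D})^{2}$ directly; this is trivially repaired and does not affect the result.
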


\begin{remarks}
\item \label{symm-dos}
	The density of states $\DD$ is an even function on account of Lemma~\ref{symm-spec}.
\item
We recall that, typically, $\DD$ exhibits a singularity at the inner band edges for constant off-diagonal blocks $b= \beta 1 >0$, see the left panel of Figure~\ref{Figure1}. Case~(B) of Theorem~\ref{Wegner} implies that, regardless of $H$, this singularity is smeared out by the disorder. In particular, $\DD$ remains bounded.
\item Figure~\ref{fig:smooth} displays a typical density of states for random block operators in the case~(H) of Theorem~\ref{Wegner}. Here, $\DD$ is not only bounded but even vanishes at the inner band edges, because it exhibits a Lifshits tail, see Section~\ref{sec:lif}.
\end{remarks}


\begin{proof}[Proof of Theorem \ref{Wegner}]
As compared to the proof of the Wegner estimate for the standard Anderson model, there are essentially two modifications necessary here. They have been isolated in the subsequent Lemmas~\ref{wegnerhilf} and~\ref{wegner-hilf-2}.

Lemma~\ref{diag-offdiag} tells us that the roles of the diagonal operator $H$ and of the
off-diagonal operator $b$ are interchangeable. Thus it suffices to prove only the
case (H). Moreover, because of Lemma~\ref{symm-spec} we will restrict ourselves to energies $E \ge 0$ without loss of generality. In fact, using Condition~(H) and Proposition~\ref{gap-H} it then suffices to consider $E \ge \lambda$, which we will do from now on. We denote by $\HHL(\Vv) :=\HHL_{D}$ the Dirichlet restriction \eqref{Dir-op}
 of the operator $\HH$ to the cube $\Lambda_L$. In this notation we make
explicit the dependence of the operator on the random variables
$\Vv :=\{V_{j}\}_{j\in\Lambda_{L}}$ (we prefer to write $V_{j}$ instead of $V(j)$ in this proof and the subsequent lemmas) and suppress the dependence
on the entries of the off-diagonal blocks.
Furthermore, we write $E_n(\Vv) := E_n\big(\HHL(\Vv)\big)$ for the
$n^{\mathrm{th}}$ eigenvalue of $\HHL(\Vv)$, where the eigenvalues
are ordered by magnitude and repeated according to multiplicity.
Finally, we fix $\varepsilon \in ]0,\min\{\lambda,1\}/3[$ and consider a switch function $\varrho$,
i.e.\ $\varrho \in C^{1}(\RR)$ is  non-decreasing with $0\leq\varrho \leq 1$,
$\varrho(\eta)=1$ for $\eta > \varepsilon$ and $\varrho(\eta)=0$ for $\eta <-\varepsilon$. Then
\begin{equation}
0 \leq \Chi_{]E-\varepsilon, E+\varepsilon]}(\eta) \leq
\varrho(\eta-E+2\varepsilon)-\varrho(\eta-E-2\varepsilon)
\end{equation}
for all $\eta\in\RR$, whence
\begin{align}
\label{switch-rewrite}
\tr_{\cH^{2}_{L}} \Big[ &\Chi_{]E -\varepsilon, E+\varepsilon]} \big(\HHL(\Vv)\big) \Big]  \nonumber\\
&\le  \sum_{n=1}^{2|\Lambda_{L}|} \Big[ \varrho\big(E_n(\Vv) - E + 2\varepsilon \big) -
		\varrho\big(E_n(\Vv) - E - 2\varepsilon \big) \Big] \nonumber\\
&= - \sum_{n=1}^{2|\Lambda_{L}|} \int_{E-2\varepsilon}^{E+2\varepsilon} \frac{\partial}{\partial \eta}
				\varrho\big(E_n(\Vv) - \eta\big)\,\d\eta \nonumber \\
&= \sum_{n=1}^{2|\Lambda_{L}|} \int_{E-2\varepsilon}^{E+2\varepsilon}
			\varrho'\big(E_n(\Vv) - \eta\big)\,\d\eta .
\end{align}
The chain rule tells us that
\begin{equation}
\sum_{j\in\Lambda_L} \frac{\partial}{\partial V_j}\,\varrho\big(E_n(\Vv) - \eta\big) =
\varrho'\big(E_n(\Vv) - \eta\big) \sum_{j\in\Lambda_L} \frac{\partial E_n(\Vv)}{\partial V_j}.
\end{equation}
Since $H^{(L)}(\Vv) \ge \lambda >0$ by Condition~(H), we conclude from this identity and
Lemma~\ref{wegnerhilf}
\begin{align}
\label{ev-move}
\varrho'\big(E_n(\Vv) - \eta\big)
&\le \frac{ E_n(\Vv)}{\lambda} \sum_{j\in\Lambda_L} \frac{\partial}{\partial V_j}\,\varrho\big(E_n(\Vv) - 	\eta\big) \nonumber\\
&\le \frac{ E + 1}{\lambda} \sum_{j\in\Lambda_L} \frac{\partial}{\partial V_j}\,\varrho\big(E_n(\Vv) - 	\eta\big)
\end{align}
for all $n\in\NN$ and all $\eta \in [E-2\varepsilon, E+2\varepsilon]$. We note that the last inequality uses  $3 \varepsilon < \min\{\lambda,1\}$. This guarantees that only those $n$ with $E_{n}(\Vv) \in ]0, E+1[$ contribute and that the $j$-sum is non-negative for these $n$ by Lemma~\ref{wegnerhilf}.

Since the random variables $\{V_j\}_{j\in\Lambda_{L}}$ are independent and have
the same individual distribution $\mu_{V}$, the expectation  is just
integration with respect to the product of these distributions. Thus \eqref{switch-rewrite} and \eqref{ev-move} imply
\begin{align}
	\label{wegner2}
	& \EE\Big\{  \tr_{\cH^{2}_{L}} \big[ \Chi_{]E -\varepsilon, E+\varepsilon]} (\HHL_{D}) \big]    \Big\} \notag\\
	&\le \frac{E+1}{\lambda} \sum_{j\in\Lambda_L} \int_{E-2\varepsilon}^{E+2\varepsilon}
		\int_\RR\!\!\ldots\int_\RR \left(  \int_\RR \frac{\partial}{\partial V_j}
		\sum_{n=1}^{2|\Lambda_{L}|} \varrho\big(E_n(\Vv)-\eta\big)\, \d\mu_{V}(V_j)\right) \notag\\
	& \hspace*{5.5cm}\times\;\bigg(\prod_{k\in\Lambda_{L}: \; k\not=j}\d\mu_{V}(V_{k})\bigg)\,\d\eta.
\end{align}
Since, in general, the function
\begin{equation}
	V_{j} \mapsto F(V_{j}) :=  \sum_{n=1}^{2|\Lambda_{L}|} \varrho\big(E_n(\Vv)-\eta\big)
	= \tr_{\cH^{2}_{L}} \Big[ \varrho\big(\HHL(\Vv) -\eta\big) \Big]
\end{equation}
is non-monotone in its argument for given $\eta\in\RR$ and $V_{k}\in\RR$, $k\neq j$, we deviate from the standard reasoning at this point. Clearly $F\in C^{1}(\RR)$ by analytic perturbation theory. Moreover changing $V_{j}$ amounts to a rank-2-perturbation of $\HHL(\Vv)$. Thus we have $|F(v) - F(v')| \le 2$ for all $v,v' \in\RR$ and we can apply Lemma \ref{wegner-hilf-2} to \eqref{wegner2}. This gives
\begin{equation}
 \EE\Big\{  \tr_{\cH^{2}_{L}} \big[ \Chi_{]E -\varepsilon, E+\varepsilon]} (\HHL_{D}) \big]    \Big\} \le 8 \varepsilon \, L^{d} \|\phi_{V}\|_{\textsc{bv}}
 \, \frac{E+1}{\lambda}.
\end{equation}
Finally, the assertion follows from Lemma~\ref{IDS-diag-limit} and dominated convergence.
\end{proof}

We come to the main deterministic tool used in the proof of Theorem \ref{Wegner}. Given Assumption~(H), it ensures that eigenvalues move around strongly enough when the random variables change their values.

\begin{lemma}
	\label{wegnerhilf}
	Let $E(\Vv)$ be an eigenvalue of $\sma{H^{(L)}(\Vv)}{B^{(L)}}$. Then
\begin{equation}
E(\Vv) \sum_{j\in\Lambda_L} \frac{\partial E(\Vv)}{\partial V_j} \geq \inf\spec\big(H^{(L)}(\Vv)\big).
\end{equation}
\end{lemma}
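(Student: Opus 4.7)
The approach will be a direct Hellmann--Feynman computation that exploits the block structure. Let $\Psi = \binom{\psi_1}{\psi_2} \in \cH^{2}_{L}$ be a normalised eigenvector of $\HHL(\Vv)$ associated with the eigenvalue $E(\Vv)$. Since the perturbation $\partial \HHL/\partial V_j$ is just the rank-$2$ block matrix $\tmatrix{|\delta_j\rangle\langle\delta_j| & 0}{0 & -|\delta_j\rangle\langle\delta_j|}$, the Hellmann--Feynman theorem gives
\begin{equation*}
\frac{\partial E(\Vv)}{\partial V_j} = |\psi_1(j)|^{2} - |\psi_2(j)|^{2},
\end{equation*}
so that summing over $j\in\Lambda_L$ produces
\begin{equation*}
\sum_{j\in\Lambda_L} \frac{\partial E(\Vv)}{\partial V_j} = \|\psi_1\|^{2} - \|\psi_2\|^{2}.
\end{equation*}

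The second step is to multiply this identity by $E(\Vv)$ and use the two eigenvalue equations $H^{(L)}\psi_1 + B^{(L)}\psi_2 = E\psi_1$ and $B^{(L)}\psi_1 - H^{(L)}\psi_2 = E\psi_2$. Taking the scalar product of the first with $\psi_1$ and of the second with $\psi_2$, then subtracting, yields
\begin{equation*}
E(\Vv)\bigl(\|\psi_1\|^{2} - \|\psi_2\|^{2}\bigr) = \langle \psi_1, H^{(L)}\psi_1\rangle + \langle \psi_2, H^{(L)}\psi_2\rangle + 2\i\,\mathrm{Im}\langle \psi_1, B^{(L)}\psi_2\rangle.
\end{equation*}
The left-hand side is real, so the off-diagonal imaginary contribution must vanish, and what remains is bounded below by $\inf\spec(H^{(L)}(\Vv))\,(\|\psi_1\|^{2} + \|\psi_2\|^{2}) = \inf\spec(H^{(L)}(\Vv))$ since $\tnorm{\Psi}=1$. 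This is exactly the asserted inequality.

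The only subtlety I anticipate is the differentiability of $E(\Vv)$ at a potentially degenerate eigenvalue; this is the routine obstacle in any Hellmann--Feynman style argument. It is harmless here because the only use made of the lemma in the Wegner proof is inside the sum $\sum_n \varrho'(E_n(\Vv)-\eta)\,\partial E_n(\Vv)/\partial V_j$, which represents $\partial_{V_j}\tr_{\cH^2_L}\!\varrho(\HHL(\Vv)-\eta)$ and is smooth in $\Vv$; the bound may therefore be established at the generic points where the spectrum of $\HHL(\Vv)$ is simple (analytic perturbation theory) and then propagated to the full set by continuity, or equivalently by choosing an analytic family of eigenprojections and tracing over each eigenspace.
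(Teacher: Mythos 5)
Your proof is correct and follows essentially the same route as the paper: apply Hellmann--Feynman to get $\partial E/\partial V_j = |\psi_1(j)|^2-|\psi_2(j)|^2$, multiply the summed identity by $E$, substitute both block equations, and observe that the $B^{(L)}$ cross terms drop out, leaving $\langle\psi_1,H^{(L)}\psi_1\rangle+\langle\psi_2,H^{(L)}\psi_2\rangle \ge \inf\spec(H^{(L)})$. The only stylistic difference is that the paper writes $E\|\psi_2\|^2=\langle E\psi_2,\psi_2\rangle$ so the cross terms cancel identically by self-adjointness of $B^{(L)}$, whereas you take the scalar product on the other side and then argue that the residual $2\i\,\mathrm{Im}\langle\psi_1,B^{(L)}\psi_2\rangle$ must vanish by the realness of the other terms; both are valid and equivalent.
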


\begin{proof}
Let $\Psi(\Vv) \equiv \Psi=\binom{\psi_{1}}{\psi_{2}} \in \cH^{2}_{L}$ be an eigenvector corresponding to the eigenvalue $E(\Vv) \equiv E$, normalised according to
$\llangle\Psi,\Psi\rrangle =\langle\psi_1,\psi_1\rangle
+\langle\psi_2,\psi_2\rangle = 1$. Thus, writing $H^{(L)}(\Vv) \equiv H^{(L)}$, it satisfies
\begin{equation}
	\label{h1-2}
	\begin{split}
		H^{(L)}\psi_1 + B^{(L)}\psi_2 &= E \psi_1, \\
		B^{(L)}\psi_1 - H^{(L)}\psi_2 &= E \psi_2.
	\end{split}
\end{equation}
By the Feynman-Hellmann theorem (see e.g.\ \cite{RS4}) we have
\begin{equation}
	\frac{\partial E}{\partial V_j} = |\psi_1(j)|^2-|\psi_2(j)|^2
\end{equation}
for every $j\in\Lambda_{L}$ and consequently, inserting \eqref{h1-2} we find
\begin{align}
E \sum_{j\in\Lambda_L} \frac{\partial E}{\partial V_j}
&= E \langle \psi_1,\psi_1\rangle - E \langle \psi_2,\psi_2\rangle \nonumber\\[-1.5ex]
&=\langle\psi_1,H^{(L)}\psi_1 + B^{(L)}\psi_2\rangle -
	\langle B^{(L)}\psi_1 - H^{(L)}\psi_2,\psi_2\rangle \nonumber\\
&= \langle\psi_1,H^{(L)}\psi_1\rangle + \langle \psi_2,H^{(L)}\psi_2\rangle \nonumber\\
&\ge \inf\spec\big(H^{(L)}\big).
\end{align}
\end{proof}

The next lemma deals with the problem of the non-monotonous dependence of the cumulative eigenvalue counting function on the random potential. It is here where we have to assume a suitable regularity of the  distribution.

\begin{lemma} \label{wegner-hilf-2}
Let $\phi: \RR \rightarrow\CC$ be piecewise continuous, of bounded variation and have compact support. Let $F \in C^{1}(\RR)$ and assume the existence of a constant $a>0$ such that $|F(x) - F(y)| \le a$ for all $x,y\in\RR$.
Then we have
\begin{equation}
\label{bv}
\left|\int_\RR F'(x) \, \phi(x)\,\d x\right| \le a \, \|\phi \|_{\textsc{bv}}.
\end{equation}
\end{lemma}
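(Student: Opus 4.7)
The plan is to establish the bound via integration by parts, after neutralising the fact that $F$ itself need not be bounded. I would first fix any reference point $x_0\in\RR$ and replace $F$ by the shifted function $\tilde F(x) := F(x) - F(x_0)$. This leaves $F' = \tilde F'$ (and hence the integral in question) unchanged, but by hypothesis $\tilde F$ now satisfies the uniform bound $\|\tilde F\|_\infty \le a$.

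Next I would associate to $\phi$ the finite signed Lebesgue--Stieltjes measure $\mu_\phi$. Since $\phi$ is piecewise continuous, of bounded variation, and compactly supported, $\mu_\phi$ is a finite signed measure with total variation $|\mu_\phi|(\RR) = \|\phi\|_{\textsc{bv}}$, and its support is contained in the closed support of $\phi$. I would then choose $A<B$ such that $\phi$ is continuous and identically zero on neighbourhoods of $A$ and $B$, with $\mathrm{supp}(\phi)\subset(A,B)$. The standard integration-by-parts formula for Stieltjes integrals, applicable because $\tilde F\in C^1(\RR)$ and $\phi$ is BV, then gives
\begin{equation*}
\int_A^B \tilde F'(x)\,\phi(x)\,\d x \;=\; \tilde F(B)\phi(B) - \tilde F(A)\phi(A) \;-\; \int_{[A,B]} \tilde F(x)\,\d\mu_\phi(x).
\end{equation*}
The boundary terms vanish by the choice of $A,B$, the left-hand side equals $\int_\RR F'(x)\phi(x)\,\d x$, and the Stieltjes integral on the right extends to $\int_\RR \tilde F(x)\,\d\mu_\phi(x)$ because $\mu_\phi$ is supported in $(A,B)$.

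Combining these observations with the elementary measure-theoretic estimate $|\int g\,\d\mu| \le \|g\|_\infty\,|\mu|(\RR)$, I would conclude
\begin{equation*}
\left|\int_\RR F'(x)\,\phi(x)\,\d x\right| \;=\; \left|\int_\RR \tilde F(x)\,\d\mu_\phi(x)\right| \;\le\; \|\tilde F\|_\infty\,|\mu_\phi|(\RR) \;\le\; a\,\|\phi\|_{\textsc{bv}},
\end{equation*}
which is the claim. The only real subtlety I expect is making sure the integration-by-parts identity is applied correctly when $\phi$ has jumps: one must use the Lebesgue--Stieltjes version phrased against $\d\mu_\phi$ (with the standard convention at jump points), rather than a naive derivative $\phi'$. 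Since $\tilde F$ is continuous, however, this is a textbook fact and does not require additional care; everything else in the argument is routine.
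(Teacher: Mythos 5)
Your proof is correct, but it takes a genuinely different route from the one in the paper. You argue via the Lebesgue--Stieltjes integration-by-parts identity: you subtract a constant so that the shifted function $\tilde F$ satisfies $\|\tilde F\|_\infty\le a$, pass to the signed measure $\mu_\phi$ associated with $\phi$, and bound $\bigl|\int \tilde F\,\d\mu_\phi\bigr|$ by $\|\tilde F\|_\infty\,|\mu_\phi|(\RR)$. The paper instead proves the inequality from scratch by an elementary two-step approximation: first for step functions $\phi_N=\sum_\nu \xi_\nu\Chi_{]x_{\nu-1},x_\nu]}$, where an Abel-summation rewriting $\phi_N=\sum_\nu(\xi_\nu-\xi_{\nu-1})\Chi_{]x_{\nu-1},x_N]}$ makes the integral equal to $\sum_\nu(\xi_\nu-\xi_{\nu-1})[F(x_N)-F(x_{\nu-1})]$, immediately bounded by $a\sum_\nu|\xi_\nu-\xi_{\nu-1}|\le a\|\phi_N\|_{\textsc{bv}}$; then by uniform approximation of a piecewise-continuous $\phi$ by such step functions with $\|\phi_N\|_{\textsc{bv}}\le\|\phi\|_{\textsc{bv}}$. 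Your version is conceptually cleaner and shorter once one accepts the Stieltjes machinery; the paper's version is self-contained and avoids having to fix conventions for $\mu_\phi$ at jump points. Two small remarks on your write-up: the identity $|\mu_\phi|(\RR)=\|\phi\|_{\textsc{bv}}$ is, strictly speaking, an inequality $|\mu_\phi|(\RR)\le\|\phi\|_{\textsc{bv}}$ (with possible strict inequality when $\phi$ has ``removable'' values at isolated points that do not affect the measure), but this only helps your bound; and the subtlety you flag about jump points is real but harmless precisely because $\tilde F$ is continuous, as you correctly observe.
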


\begin{proof}
Step 1:~~ We prove the claim for step functions
\begin{equation}
 \phi_{N} = \sum_{\nu =1}^{N} \xi_{\nu} \Chi_{]x_{\nu-1}, x_{\nu}]} = \sum_{\nu =1}^{N} (\xi_{\nu} - \xi_{\nu -1}) \Chi_{]x_{\nu-1}, x_{N}]}
\end{equation}
where $N \in\NN$, $\xi_{\nu} \in \CC$ for all $\nu \in \{1, \ldots, N\}$, $\xi_{0} :=0$ and $-\infty < x_{0} < x_{1} <  \ldots < x_{N} <  \infty$. For such $\phi_{N}$ we get
\begin{align}\label{wegner1}
\left| \int_\RR \! F'(x) \, \phi_{N}(x)\,\d x \right|
&= \bigg| \sum_{\nu=1}^{N} (\xi_{\nu} - \xi_{\nu-1} ) [F(x_{N}) - F(x_{\nu -1})]\bigg| \nonumber\\
&\le a \sum_{\nu=1}^{N} |\xi_{\nu} - \xi_{\nu-1}| \le a\, \|\phi_{N}\|_{\textsc{bv}}.
\end{align}

Step 2:~~ For general $\phi$ as in the lemma, the claim \eqref{bv} follows from a uniform approximation of $\phi$ by step functions of the form
$\phi_{N} := \sum_{\nu=1}^{N} \phi(x_{\nu}) \Chi_{]x_{\nu-1}, x_{\nu}]}$. Indeed,
for every given $\varepsilon>0$ we can choose discretisation points $x_{\nu}$, $\nu=1,\ldots,N$ and $N\in\NN$ such that $\|\phi - \phi_{N}\|_{\infty} \le \varepsilon$, because $\phi$ is piecewise uniformly continuous. Therefore $\int_{\RR} F'(x) \, \phi(x)\,\d x = \lim_{N\to\infty} \int_{\RR} F'(x) \, \phi_{N}(x)\,\d x $ for a suitable sequence of step functions $\phi_{N}$, because $F'$ is bounded on the support of $\phi$. This,
the bound from Step~1 and $\|\phi_{N}\|_{\textsc{bv}} \le \|\phi\|_{\textsc{bv}}$ finish the proof.
\end{proof}

\section{Random diagonal $B$: Lifshits tails}
\label{sec:lif}

It is a striking fact that the integrated density of states of random Schr\"odinger operators grows only exponentially slowly in the vicinity of fluctuation band edges. This behaviour is called \textit{Lifshits tail} \cite{Lif64}.

In Theorem~\ref{lif-gen} we provide a result for random block operators $\HH$ of the form \eqref{ran-HH-diag} which limits the growth of the integrated density of states $\IZ$ at energy $\lambda$, provided $H \ge \lambda 1 >0$ exhibits a Lifshits tail at energy $\lambda$. We emphasise that Theorem~\ref{lif-gen} is only interesting in the case where $\pm\lambda$ coincide with the endpoints of the spectral gap of $\HH$ around zero. According to Corollary~\ref{gap-end} this always happens if $0 \in\supp(\mu_{b})$.

\begin{theorem}
	\label{lif-gen}
Consider the random block operator
\begin{equation}
\HH : \omega\mapsto \HH_{\omega} =\sma{H_{\omega}}{b_{\omega}}
\end{equation}
where $H$ and $b$ are given as in \eqref{ran-Schroed} and \eqref{ran-offdiag} and let $\IZ$ be its integrated density of states. Suppose in addition that $\lambda := \inf\spec(H) > 0$ and that there exists constants $\alpha,\gamma >0$ such that for all sufficiently small $\varepsilon >0$
\begin{equation}
 \label{lif-H}
 \PP\big[ \inf\spec(H^{(L_{\varepsilon})}_{N}) \le \lambda + \varepsilon \big]  \le \e^{-\gamma\varepsilon^{-\alpha}}
\end{equation}
where $L_{\varepsilon}$ is a sequence of diverging lengths as $\varepsilon\downarrow 0$ with $\lim_{\varepsilon\downarrow 0} \varepsilon^{\alpha/d} L_{\varepsilon}$ exists and lies in $]0,\infty[$.
Then the estimate
	\begin{equation}
		\label{upper-ineq}
 		\limsup_{\varepsilon \downarrow 0} \frac{\ln \big| \ln [ \IZ(\lambda + \varepsilon) - \IZ(\lambda)] \big|}{\ln \varepsilon} \le -\alpha
	\end{equation}
holds.
\end{theorem}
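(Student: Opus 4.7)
The strategy is to upper-bound $\IZ(\lambda+\varepsilon)-\IZ(\lambda)$ by the expected counting function of the finite-volume bracketing operator $\HH^{(L_\varepsilon)}_{-}$ and then to show, using the Lifshits-type input~\eqref{lif-H}, that with overwhelming probability this operator has no eigenvalues in $]\lambda,\lambda+\varepsilon]$. First, particle-hole symmetry (Lemma~\ref{symm-spec}) gives the reflection identity $\IZ(E)+\IZ((-E)^{-})=1$, while Proposition~\ref{gap-H} (applicable thanks to $H\ge\lambda\one$) ensures that the a.s.\ spectrum of $\HH$ avoids $]-\lambda,\lambda[\,$. The density-of-states measure therefore vanishes on this open gap, so $\IZ$ is constant on it, and the reflection identity forces that constant to be $\tfrac12$; in particular $\IZ(\lambda^{-})=\tfrac12$. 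Right-continuity of $\IZ$ and \eqref{Dir-Neu-ids} then yield
\begin{equation*}
	\IZ(\lambda+\varepsilon)-\IZ(\lambda)\;\le\;\IZ(\lambda+\varepsilon)-\tfrac12\;\le\;\EE\bigl[\IZ^{(L_\varepsilon)}_{-}(\lambda+\varepsilon)\bigr]-\tfrac12.
\end{equation*}

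The next step introduces the good event $F_\varepsilon:=\bigl\{\inf\spec(H^{(L_\varepsilon)}_N)>\lambda+\varepsilon\bigr\}$, whose complement has probability at most $\e^{-\gamma\varepsilon^{-\alpha}}$ by~\eqref{lif-H}. On $F_\varepsilon$ the identity~\eqref{boundaries-relate} gives $H^{(L_\varepsilon)}_D\ge H^{(L_\varepsilon)}_N>\lambda+\varepsilon$, so both diagonal blocks of $\HH^{(L_\varepsilon)}_{-}=\tmatrix{H^{(L_\varepsilon)}_N & b}{b & -H^{(L_\varepsilon)}_D}$ are strictly above $\lambda+\varepsilon$, and Lemma~\ref{gap-different} produces an open spectral gap of $\HH^{(L_\varepsilon)}_{-}$ strictly containing $[-(\lambda+\varepsilon),\lambda+\varepsilon]$.

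Counting the eigenvalues of $\HH^{(L_\varepsilon)}_{-}$ below $\lambda+\varepsilon$ on $F_\varepsilon$ is the technically delicate part, because $\HH^{(L_\varepsilon)}_{-}$ mixes Neumann and Dirichlet boundary data and no longer inherits the particle-hole symmetry of $\HH$. The plan is to use a continuous deformation of the off-diagonal coupling: for $t\in[0,1]$ set
\begin{equation*}
	\HH_t:=\tmatrix{H^{(L_\varepsilon)}_N & tb}{tb & -H^{(L_\varepsilon)}_D}.
\end{equation*}
Since Lemma~\ref{gap-different} is insensitive to the off-diagonal block, the gap $\,]-(\lambda+\varepsilon),\lambda+\varepsilon[\,$ persists for $\HH_t$ uniformly in $t\in[0,1]$ on $F_\varepsilon$. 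The eigenvalues of $\HH_t$ depend continuously on $t$, and this persistent gap prevents any sign changes, so the number of positive eigenvalues of $\HH_t$ is independent of $t$. At $t=0$ the operator is block-diagonal and has $|\Lambda_{L_\varepsilon}|$ positive eigenvalues (from $H^{(L_\varepsilon)}_N>0$) and $|\Lambda_{L_\varepsilon}|$ negative ones (from $-H^{(L_\varepsilon)}_D<0$). Consequently $\HH^{(L_\varepsilon)}_{-}=\HH_1$ has the same signature on $F_\varepsilon$, its $|\Lambda_{L_\varepsilon}|$ positive eigenvalues lie above $\lambda+\varepsilon$, and therefore $\IZ^{(L_\varepsilon)}_{-}(\lambda+\varepsilon)=\tfrac12$ on $F_\varepsilon$.

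Combining with the trivial bound $\IZ^{(L_\varepsilon)}_{-}(\lambda+\varepsilon)\le1$ on $F_\varepsilon^{\mathrm{c}}$ finally yields
\begin{equation*}
	\IZ(\lambda+\varepsilon)-\IZ(\lambda)\;\le\;\tfrac12\,\PP[F_\varepsilon^{\mathrm{c}}]\;\le\;\tfrac12\,\e^{-\gamma\varepsilon^{-\alpha}},
\end{equation*}
and two iterated logarithms (using that $\ln\varepsilon<0$ for small~$\varepsilon$) produce~\eqref{upper-ineq}. The main obstacle in this plan is the signature count of the third paragraph; the asymmetric nature of $\HH^{(L_\varepsilon)}_{-}$ is what makes a soft symmetry argument unavailable, and the deformation trick succeeds only because Lemma~\ref{gap-different} does not rely on the off-diagonal block.
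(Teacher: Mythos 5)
Your proof is correct and takes essentially the same route as the paper: particle-hole symmetry and the spectral gap pin $\IZ$ at $\tfrac12$ on $]-\lambda,\lambda[$, Dirichlet--Neumann bracketing reduces the problem to finite volume, a continuous deformation protected by Lemma~\ref{gap-different} fixes the signature of $\HH^{(L)}_{-}$, and hypothesis \eqref{lif-H} controls the remaining probability. The only (cosmetic) difference is the deformation used for the signature count: you interpolate the off-diagonal coupling $t\mapsto tb$ down to the block-diagonal $\tmatrix{H^{(L)}_N & 0}{0 & -H^{(L)}_D}$, whereas the paper interpolates the Dirichlet boundary term $a\mapsto\HH^{(L)}_N-2a\tmatrix{0 & 0}{0 & \Gamma^{(L)}}$ and reads off $\EE[\IZ^{(L)}_N(0)]=\tfrac12$ from the particle-hole symmetry of $\HH^{(L)}_N$.
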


\begin{remarks}
	\item Assumption \eqref{lif-H} is the statement which is typically proven for a random Schr\"odinger operators $H$ of the form \eqref{ran-Schroed} when establishing the upper bound
	\begin{equation}
 		\limsup_{\varepsilon \downarrow 0} \frac{\ln \big| \ln [ N(\lambda + \varepsilon) ] \big|}{\ln \varepsilon} \le -\alpha
	\end{equation}
	for a Lifshits tail with Lifshits exponent $\alpha =d/2$ of the integrated density of states $N$ at the lower spectral edge $\lambda$. In other words, if $0\in\supp(\mu_{b})$, then Theorem~\ref{lif-gen} says that the growth of $\IZ$ near the lower edge of the positive a.s.\ spectrum of $\HH$ is no faster than the growth of $N$ near the bottom of the a.s.\ spectrum of $H$. Note that the bound on $\IZ$ holds independently of $b$.
  \item An analogous statement to Theorem~\ref{lif-gen} holds at the upper edge of the negative spectrum of $\HH$.
\end{remarks}

\begin{proof}[Proof of Theorem~\ref{lif-gen}]
We claim that for all energies $E \ge 0$ and all finite cubes $\Lambda_{L}$ the estimate
\begin{equation}
 	\IZ(E) - \IZ(0) \le \EE\big[ \IZ_{-}^{(L)}(E) - \IZ_{-}^{(L)}(0)\big]
\end{equation}
holds. This follows from  Dirichlet-Neumann bracketing \eqref{Dir-Neu-ids} and
\begin{equation}
	\label{zero}
	\IZ(0) = \frac{1}{2} = \EE \big[ \IZ_{N}^{(L)}(0) \big] =\EE \big[ \IZ_{-}^{(L)}(0) \big].
\end{equation}
The first two equalities in \eqref{zero} are based on Lemma~\ref{symm-spec} and that zero is not in the spectrum. In order to see the last equality in \eqref{zero} we view $\HHL_{-} = \HHL(1)$ as an analytic perturbation of $\HHL_{N} = \HHL(0)$, where
\begin{equation}
  \HHL(a) := \HHL_{N} -2 a \tmatrix{0 & 0}{0 & {\Gamma^{(L)}}}
\end{equation}
for $a\in\RR$ and $\Gamma^{(L)}$ was introduced in \eqref{boundaries-relate}. Analytic perturbation theory tells us that the eigenvalues of $\HHL(a)$ depend continuously on the parameter $a$. On the other hand, we infer from Lemma~\ref{gap-different} that zero lies in an open spectral gap of $\HHL(a)$ of size at least $2\lambda$ for every $a\in [0,1]$. Therefore $\HHL(0)$ must have exactly as many positive (negative) eigenvalues as $\HHL(1)$, and the third equality in \eqref{zero} holds.

Now let $\varepsilon >0$ and observe
\begin{equation}
	\label{lif-start}
  \EE\big[ \IZ_{-}^{(L)}(\lambda +\varepsilon) - \IZ_{-}^{(L)}(0)\big] \le  \; \PP \big[ \spec(\HHL_{-}) \cap ]0, \lambda +\varepsilon] \neq \emptyset \big].
\end{equation}
By contradiction we conclude from Lemma~\ref{gap-H} that the event in the probability on the r.h.s.\ of \eqref{lif-start} implies the event $\inf \spec(H_{N}^{(L)}) \le \lambda + \varepsilon$. Hence, we get
\begin{equation}
 	\IZ(\lambda +\varepsilon) - \IZ(0) \le \PP \big[ \inf \spec(H_{N}^{(L)}) \le \lambda + \varepsilon \big],
\end{equation}
and the claim follows from \eqref{lif-H}.
\end{proof}

\end{document}